\newtheorem{theorem}{Theorem}
\newtheorem{lemma}{Lemma}
\newtheorem{proposition}{Proposition}
\newtheorem{definition}{Definition}
\newtheorem{remark}{Remark}
\newtheorem{example}{Example}
\newcommand{\Rmnum}[1]{\expandafter\@slowromancap\romannumeral #1@}
\begin{document}
\title{\LARGE Neuromimetic Control --- A Linear Model Paradigm}
\author{John Baillieul \& Zexin Sun}
\maketitle
\let\thefootnote\relax\footnotetext{\noindent
\hspace{-0.1in}\hrulefill
\hspace{0.8in}\\John Baillieul is with the Departments of Mechanical Engineering, Electrical and Computer Engineering, and the Division of Systems Engineering at Boston University, Boston, MA 02115. Zexin Sun is with the Division of Systems Engineering at Boston University.  The authors may be reached at {\tt \{johnb, zxsun\}@bu.edu}. \newline Support from various sources including the Office of Naval Research grants N00014-10-1-0952, N00014-17-1-2075, and N00014-19-1-2571 is gratefully acknowledged.\\
A condensed version of this paper has been submitted to the 60$^{th}${\em IEEE Conference on Decision and Control}.} 
\begin{abstract}
\noindent Stylized models of the neurodynamics that underpin sensory motor control in animals are proposed and studied.   The voluntary motions of animals are typically initiated by high level intentions created in the primary cortex through a combination of perceptions of the current state of the environment along with memories of past reactions to similar states.  Muscle movements are produced as a result of neural processes in which the parallel activity of large multiplicities of neurons generate signals that collectively lead to desired actions.  Essential to coordinated muscle movement are intentionality, prediction, regions of the cortex dealing with misperceptions of sensory cues, and a significant level of resilience with respect to disruptions in the neural pathways through which signals must propagate. While linear models of feedback control systems have been well studied over many decades, this paper proposes and analyzes a class of models whose aims are to capture some of the essential features of neural control of movement.  Whereas most linear models of feedback systems entail a state component whose dimension is higher than the number of inputs or outputs, the work that follows will treat models in which the numbers of input and output channels greatly exceed the state dimension.  While we begin by considering continuous-time systems governed by differential equations, the aim will be to treat systems whose evolution involves classes of inputs that emulate neural spike trains.  Within the proposed class of models, the paper will study resilience to channel dropouts, the ways in which noise and uncertainty can be mitigated by an appropriate notion of consensus among noisy inputs, and finally, by a simple model in which binary activations of a multiplicity of input channels produce a dynamic response that closely approximates the dynamics of a prescribed linear system whose inputs are continuous functions of time.
\end{abstract}
\begin{flushleft} {{\bf Keywords}: Neuromimetic control, parallel quantized actuation, channel intermittency, neural emulation} 
\end{flushleft}


 \section{Introduction}

The past two decades have seen dramatic increases in research devoted to information processing in networked control systems, \cite{Baillieul2007}.  At the same time, rapidly advancing technologies have spurred correspondingly expanded research activity in cyber-physical systems and autonomous machines ranging from assistive robots to self-driving automobiles, \cite{Leonard2008}.  Against this backdrop, a new focus of research that seeks connections between control systems and neuroscience has emerged, (\cite{Huang2018,Gawthrop,Khargonekar}), The aim is to understand how neurobiology should inform the design of control systems that can not only react in real-time to environmental stimuli but also display predictive and adaptive capabilities, \cite{Markkula}.  More fundamentally, however, the work described below seeks to understand how these capabilities might emerge from parallel activity of sets of simple inputs  that play a role analogous to sets of neurons in biological systems.  What follows are preliminary results treating linear control systems in which simple inputs steer the system by collective activity.  Principles of resilience with respect to channel dropouts and intermittency of channel availability are explored, and the stage is set for further study of prediction and learning.  The paper is organized as follows.  Section II introduces problems in designing {\em resilient} feedback stabilized motions of linear systems with (very) large numbers of inputs.  (Resilience here means that the designs still achieve design objective if a number of input channels become unavailable.)  Section III introduces an approach to resilient designs based on an approach we call {\em parameter lifting}.  Section IV briefly discusses how adding input channels which transmit noisy signals inputs can have the net effect of reducing uncertainty in achieving a control objective.  Section V takes up the problem of control with quantized inputs, and Section VI concludes with a discussion of ongoing work that is aimed at making further connections with problems in neurobiology.
\begin{figure}[h]
\begin{center}
\includegraphics[scale=0.25]{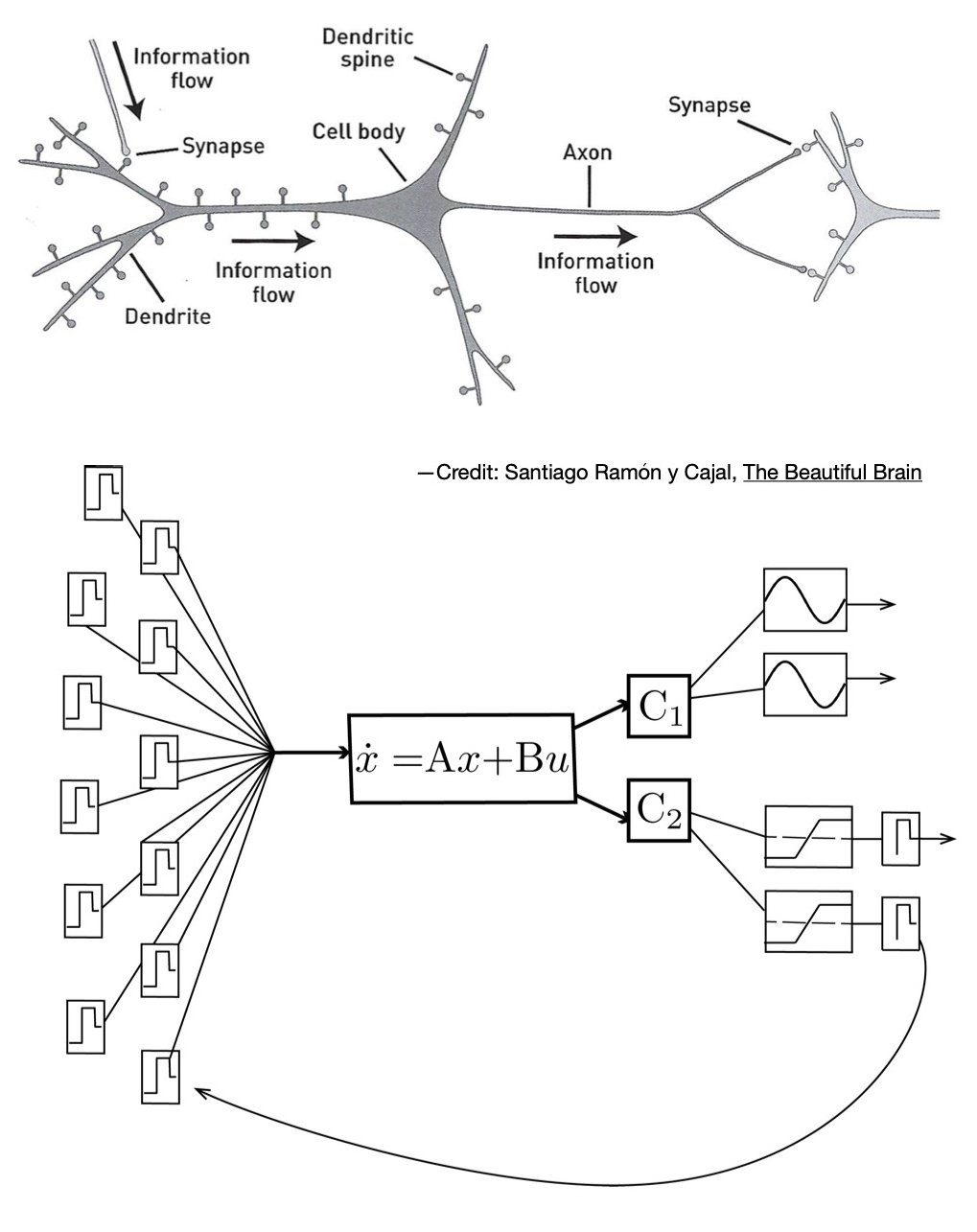}
\end{center}
\caption{The defining characteristics of neuromimetic linear systems are large numbers of input and output channels, all of which carry simple signals comprising possibly only a single bit of information and which collectively steer the system to achieve system goals.}
\end{figure}

\section{Linear Systems with Large Numbers of Inputs}

The models to be studied have the simple form
\begin{equation}
\begin{array}{l}
\dot x(t)=Ax(t) + Bu(t), \ \ \ x\in\mathbb{R}^n, \ \ u\in\mathbb{R}^m, \ {\rm and}\\[0.07in]
y(t)=Cx(t), \ \ \ \ \ \ \ \ \ \ \ \ \ \ y\in\mathbb{R}^q.\end{array}
\label{eq:jb:linear}
\end{equation}
As in \cite{Baillieul1,Baillieul2}. we shall be interested in the evolution and output of (\ref{eq:jb:linear}) in which a portion of the input or output channels may or may not be available over any given subinterval of time.  Among cases of interest, channels may intermittently switch in or out of operation.  In all cases, we are explicitly assuming that $m,q > n $.  In \cite{Baillieul1} we studied the way the structure of a system of this form might be affected by random unavailability of input channels.  The work in \cite{Baillieul2} showed the advantages of having large numbers of input channels (as measured by input energy costs as a function of the number of active input channels and resilience to channel drop-outs).  In what follows we shall show that having large numbers of parallel input channels provides corresponding advantages in dealing with noise and uncertainty.


To further explore the advantages of large numbers of inputs in terms of robustness to model and operating uncertainty and resilience with respect to input channel dropouts, we shall examine control laws for systems of the form (\ref{eq:jb:linear}) in which groups of control primitives are chosen from dictionaries and aggregated on the fly to achieve desired ends.  The ultimate goal is to understand how carefully aggregated finitely quantized inputs can be used to steer (\ref{eq:jb:linear}) as desired.  To lay the foundation for this inquiry, we begin by focusing on dictionaries of continuous control primitives that we shall call {\em set point stabilizing}.


\subsection{Resilient eigenvalue assignment for systems with large numbers of inputs}

Briefly introduced in \cite{Baillieul2} our dictionaries will be comprised of {\em set-point stabilizing} control primitives  of the form 
\[
u_j(t)=v_j+k_{j1}x_1 + \dots + k_{jn}x_n,\ \ j=1,\dots,m,
\]
where the gains $k_{ji}$ are chosen to make the matrix $A+BK$ Hurwitz, and the $v_j$'s are then chosen to make the desired goal point $x_g\in\mathbb{R}^n$ an equilibrium of (\ref{eq:jb:linear}).  Thus, given $x_g$ and a desired gain matrix $K$ the vector $v$ can chosen so as to satisfy the equation

\begin{equation}
(A+BK)x_g+Bv=0.
\label{OffSet}
\end{equation}

\begin{proposition}
Let $B$ be full rank $=n$, and let the $m\times n$ matrix $K$ be such that the eigenvalues of $A+BK$ are in the open left half plane. If $v$ is any vector satisfying (\ref{OffSet}), then the $m$ control inputs
\begin{equation}
u_j(t)=v_j + k_{j1}x_1(t) + \cdots + k_{jn}x_n(t)
\label{eq:jb:standard}
\end{equation}
steer (\ref{eq:jb:linear}) toward the goal $x_g$.
\end{proposition}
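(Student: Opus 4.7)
My plan is to treat this as a standard closed-loop equilibrium argument, with the one subtlety being why the algebraic condition (\ref{OffSet}) is solvable under the stated hypotheses. I would proceed in three short steps.

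First, I would collect the $m$ scalar inputs (\ref{eq:jb:standard}) into the vector form $u(t) = v + K x(t)$ and substitute into (\ref{eq:jb:linear}) to obtain the closed-loop equation
\[
\dot x(t) \;=\; A x(t) + B\bigl(v + K x(t)\bigr) \;=\; (A+BK)x(t) + Bv.
\]
Next, I would introduce the error variable $e(t) = x(t) - x_g$ and use the hypothesis (\ref{OffSet}) to eliminate the affine term, obtaining
\[
\dot e(t) \;=\; (A+BK)\bigl(e(t)+x_g\bigr) + Bv \;=\; (A+BK)\,e(t),
\]
where the bracketed constant $(A+BK)x_g + Bv$ vanishes by assumption. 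Since $A+BK$ is Hurwitz, $e(t) \to 0$ as $t\to\infty$, so $x(t)\to x_g$, which is the conclusion.

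The only point that merits explicit comment is why a $v$ satisfying (\ref{OffSet}) exists in the first place, so that the proposition is not vacuous. Because $B$ is $n\times m$ with $\mathrm{rank}\,B = n$, the linear map $v \mapsto Bv$ from $\mathbb{R}^m$ to $\mathbb{R}^n$ is surjective, and hence the equation $Bv = -(A+BK)x_g$ admits solutions (in fact an $(m-n)$-dimensional affine family of them, which will be exploited later in the resilience discussion). I would state this as a one-line remark preceding or following the main calculation.

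There is no genuine obstacle here; the result is essentially a restatement of the familiar fact that any Hurwitz closed-loop matrix renders the associated equilibrium globally asymptotically stable. The value of writing it out in this form is that it isolates the two design degrees of freedom, namely the choice of Hurwitz gain $K$ and the choice of offset $v$ within the affine preimage $B^{-1}\bigl(-(A+BK)x_g\bigr)$, which are precisely the handles that later sections manipulate to obtain resilience to channel dropouts.
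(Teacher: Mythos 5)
Your proof is correct and is exactly the standard argument the paper intends (the paper states Proposition 1 without proof, treating it as the routine observation that (\ref{OffSet}) makes $x_g$ an equilibrium of the Hurwitz closed-loop system $\dot x=(A+BK)x+Bv$). Your added remark on the existence of $v$ via surjectivity of $B$ is a sensible, harmless supplement consistent with the paper's later use of the $(m-n)$-dimensional freedom in choosing $v$.
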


  Since $m>n$, the problem of finding values $k_{ij}$ making $A+BK$ Hurwitz and $v$ satisfying (\ref{OffSet}) is underdetermined.  We can thus carry out a parametric exploration of ranges of values $K$ and $v$ that make the system (\ref{eq:jb:linear}) {\em resilient} in the sense that the loss of one (or more) input channels will not prevent the system from reaching the desired goal point.
To examine the effect of channel unavailability and channel intermittency, let $P$ be an $m\times m$ diagonal matrix whose diagonal entries are $k$ 1's and $m$-$k$ 0's.  For each of the $2^m$ possible projection matrices of this form, we shall be interested in cases where $(A,BP)$ is a controllable pair.  We have the following:

\begin{definition}\rm
Let $P$ be such a projection matrix with $k$ $1's$ on the main diagonal.  The system (\ref{eq:jb:linear}) is said to be $k$-{\em channel controllable with respect to} $P$ if for all $T>0$, the matrix
\[
W_P(0,T)= \int_0^T\, e^{A(T-s)}BPB^T{e^{A(T-s)}}^T\, ds.
\]
is nonsingular.
\label{def:jb:One}
\end{definition}

\begin{remark} \rm For the system $\dot x=Ax+Bu$, being $k$-channel controllable with respect to a canonical projection $P$ having $k$ ones on the diagonal is equivalent to $(A,BP)$ being a controllable pair.
\end{remark}

\smallskip



\begin{example} \rm
Consider the three input system
\begin{equation}
\left(\begin{array}{c}
\dot x_1 \\
\dot x_2\end{array}\right)
=\left(\begin{array}{cc}
0 & 1\\
0 & 0\end{array}
\right)
\left(\begin{array}{c}
x_1 \\
x_2\end{array}\right)
+\left(
\begin{array}{ccc}
 0 & 1 & 1 \\
 1 & 0 & 1 \\
\end{array}
\right)u.
\label{eq:jb:kChannel}
\end{equation}
Adopting the notation
\[
P[i,j,k]=
\left(
\begin{array}{ccc}
 i & 0 & 0 \\
 0 & j & 0 \\
 0 & 0 &k
\end{array}\right),
\]
the system (\ref{eq:jb:kChannel}) is 3-channel controllable with respect to $P[1,1,1]$; it is 2-channel controllable with respect to $P[1,1,0],P[1,0,1],$ and $P[0,1,1]$.  It is 1-channel controllable with respect to $P[1,0,0]$ and $P[0,0,1]$, but it fails to be 1-channel controllable with respect to $P[0,1,0]$.  
\end{example}


Within classes of system (\ref{eq:jb:linear}) that are $k-$channel controllable, for $1\le k\le m$, we wish to characterize control designs that achieve set point goals despite different sets of $j$ control channels ($1\le j\le m-k$) being either intermittently or perhaps even entirely unavailable.  When $m>n$, the problem of finding $K$ and $v$ such that $A+BK$ is Hurwitz and (\ref{OffSet}) is satisfied leaves considerable room to explore solutions $(K,v)$ such that $A+BK$ is Hurwitz and  $(A+BPK)x_g + BPv=0$ for various coordinate projections of the type considered in Definition \ref{def:jb:One}.



To take a deeper dive into the theory of LTI systems with large numbers of input channels, we introduce some notation.  Fix integers $0 \le k \le m$, and let $[m] := \{1,\dots, m\}$.  Let $[m]\choose k$ be the set of $k$-element
subsets of [m].  In Example 3, for instance, ${[3]\choose 2} = \left\{\{1,2\},\{1,3\},\{2,3\}\right\}$.  Extending the $2\times 3$ example, we consider matrix pairs $(A,B)$ where $A$ is $n\times n$ and $B$ is $n\times m$.  We consider the following:
\bigskip

\noindent{\bf Problem A.}  Find an $m\times n$ gain matrix $K$ such that $A+BP_IK$ is Hurwitz for projections $P_I$ onto the coordinates corresponding to the index set $I$ for all $I\in {[m]\choose j}$ with $n\le j\le m$.

\medskip

\noindent{\bf Problem B.}  Find an $m\times n$ gain matrix $K$ that assigns eigenvalues of $A+BP_IK$ to specific values for each $I\in {[m]\choose n}$.

\bigskip

\noindent
Both of these problems can be solved if $m=n+1$, but when $m>n+1$, there are more equations than unknowns, making the problems over constrained and thus generally not solvable.  
We consider the following:

\medskip 

\begin{definition}\rm 
Given a system (\ref{eq:jb:linear}) where $A$ and $B$ are respectively $n\times n$ and $n\times m$ matrices with $n\le m$, for each $I\in{[m]\choose n}$, the $I$-th {\it principal subsystem} is given by the state evolution
\[
\dot x(t) = Ax(t)+BP_Iu(t).
\]
\begin{flushright}
$\square$
\end{flushright}
\end{definition}

Problem B requires solving $n$ equations for each $I\in{[m]\choose n}$, and if we seek an $m\times n$ gain matrix $K$ which places eigenvalues of $A+BP_IK$ for every$I\in{[m]\choose n}$, then a total of $n{m\choose n}$the simultaneous equations must be solved to determine the $n m$ entries in $K$.  Noting that $nm\le n{m\choose n}$, with the inequality being strict if $m>n+1$, we see that solving Problem B cannot be solved in general.  Problem A is less exacting in that it only requires eigenvalues to lie in the open left half plane, but it carries the further requirement that a single gain $K$ places all the closed-loop eigenvalues of all $I-th$ subsystems in the open left half plane for $I\in{[m]\choose n}$ and all $k$ in the range $n\le k\le m$.  The following example shows that solutions to Problem B are not necessarily solutions to Problem A as well and illustrates the complexity of the {\em resilient eigenvalue placement problem}.

\medskip 

\begin{example} \rm
For the system (\ref{eq:jb:kChannel}) of Example 1, we consider the three 2-channel controllable pairs
\[
(A,BP[110]),\ \ \  (A,BP[101]), \ \ \ (A,BP[011]).
\]
We look for a $3\times 2$ gain matrix $K$ such that $A+BPK$ has specified eigenvalues for each of these $P$'s.  Thus we seek to assign eigenvalues to the three matrices
\[
\left(\begin{array}{cc}
  k_{21} & 1 +k_{22}\\
  k_{11} & k_{12}\end{array}
  \right),\ \ \ 
 \left(\begin{array}{cc}
  k_{31} & 1 +k_{32}\\
  k_{11} +k_{31} & k_{12} +k_{32}\end{array}
  \right), 
  \]
  \[
 \left(\begin{array}{cc}
  k_{21} +k_{31} & 1 +k_{22} + k_{32}\\
  k_{31} & k_{32}\end{array}
  \right)
\]
respectively.  For any choice of LHP eigenvalues, this requires solving six equations in six unknowns.  For all three matrices to have eigenvalues at (-1,-1)  the $k_{ij}$-values $k_{11}=0 ,k_{12}= -1,k_{21}= -1,k_{22}= 0, k_{31}= -1/2,k_{32}= -1/2$ place these closed loop eigenvalues as desired.  For this choice of $K$, the closed loop eigenvalues of $(A + BK)$ are $-3/2\pm i/2$.  It is not possible to assign all eigenvalues by state feedback independently for the four controllable pairs $(A,B)$, $(A,BP[110])$, $(A,B,P[101])$, and $(A,BP[011])$.  Moreover, it is coincidental that the eigenvalues of $A+BK$ are in the left half plane.  This is seen in the following examples.  
(Consider $k_{11}=-2 ,k_{12}= -3,k_{21}= -1,k_{22}= -2.4, k_{31}= 0,k_{32}= -1/2$.  There are the following closed-loop eigenvalues: For $A+BP[110]K$: $(-3.95,-0.05)$, for $A+BP[101]K$: $(-3.19,-0.31)$, and for $A+BP[011]K$: $(-1,-0.5)$; but the eigenvalues of $A+BK$ are $(-4.57,0.066)$.)
\end{example}

Resilience requires an approach that applies simultaneously to all pairs ($A,BP_I$) where $P_I$ ranges over the lattice of projections $P_I$, $I\in {[m]\choose j},\ j=n\dots,m$.
In view of these examples, it is of interest to explore conditions under which the stability of feedback designs will be preserved as control input channels become intermittently or even permanently unavailable.


\section{Lifted Parameters}

 To exploit the advantages of a large number of control input channels, we turn our attention to using the extra degrees of freedom in specifying the control offsets $v_1,\dots,v_m$ so as to make ({\ref{eq:jb:linear}) resilient in the face of channels being intermittently unavailable.  As noted in \cite{Baillieul2}, we seek an offset vector $v$ that satisfies
$B\left[(\hat A + K)x_g +v\right]=0$
where $\hat A$ is any $m\times n$ matrix satisfying $B\hat A = A$.  If $K$ is chosen such that $A+BK$ is Hurwitz, the feedback law $u=Kx+v$ satisfies (\ref{OffSet}) and by Proposition 1 steers the state of (\ref{eq:jb:linear}) asymptotically toward the goal in $x_g$.  Under the assumption that $B$ has full rank $n$, such matrix solutions can be found--although such an $\hat A$ will not be unique.  Once $\hat A$ and the gain matrix $K$ have been chosen, the offset vector $v$ is determined by the equation 
\begin{equation}
(\hat A + K)x_g+v=0.
\label{OffSet2}
\end{equation}
Conditions under which $\hat A$ may be chosen to make (\ref{eq:jb:linear}) resilient to channel dropouts is the following.

\begin{theorem}
(\cite{Baillieul2}) Consider the linear system (\ref{eq:jb:linear}) in which the number of control inputs, $m$, is strictly larger than the dimension of the state, $n$ and in which rank $B=n$.  Let the gain $K$ be chosen such that $A+BK$ is Hurwitz, and assume that\\
(a)  $P$ is a projection of the form considered in Definition \ref{def:jb:One} and (\ref{eq:jb:linear}) is ${\ell}$-channel controllable with respect to $P$;\\
(b)  $A+BPK$ is Hurwitz;\\
(c) the solution $\hat A$ of $B\hat A=A$ is invariant under $P$---i.e., $P\hat A = \hat A$; and\\
(d) $BP$ has rank $n$.
Then the control inputs defined by (\ref{eq:jb:standard}) steer (\ref{eq:jb:linear}) toward the goal point $x_g$ whether or not the $(m-\ell)$ input channels that are mapped to zero by $P$ are available. 
\label{thm:jb:four}
\end{theorem}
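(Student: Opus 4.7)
The plan is to verify, directly from the closed-loop equation, that the trajectories still converge to $x_g$ when the $m-\ell$ channels in the kernel of $P$ are removed. Two things have to be checked: (i) $x_g$ is still an equilibrium of the projected closed-loop system, and (ii) that equilibrium is asymptotically stable. Step (ii) is immediate from the assumptions; step (i) is an algebraic identity that hinges on the invariance condition (c).

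First I would substitute the feedback $u = Kx + v$ into the dropout dynamics $\dot x = Ax + BPu$ to obtain the projected closed-loop system
\[
\dot x = (A+BPK)x + BPv.
\]
Asymptotic stability of the origin of the homogeneous part is exactly hypothesis (b), so it suffices to show that $x_g$ is the equilibrium, i.e., that $(A+BPK)x_g + BPv = 0$.

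Next I would compute that quantity using (\ref{OffSet2}), which gives $v = -(\hat A + K)x_g$. Then
\[
BPv = -BP\hat A\, x_g - BPK\, x_g.
\]
By hypothesis (c), $P\hat A = \hat A$, so $BP\hat A = B\hat A$, and by the defining property of $\hat A$, $B\hat A = A$. Adding $(A + BPK)x_g$ yields
\[
(A+BPK)x_g + BPv = Ax_g + BPK\,x_g - A x_g - BPK\,x_g = 0,
\]
which establishes (i). Convergence $x(t)\to x_g$ as $t\to\infty$ then follows from the shifted coordinate $z = x - x_g$, which satisfies $\dot z = (A+BPK)z$ with $A+BPK$ Hurwitz.

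The essential content of the theorem is the one-line identity in step (i), and the true workhorse is condition (c): the invariance $P\hat A = \hat A$ guarantees that the ``lifted'' drift $\hat A$ is already supported on the channels that remain active under $P$, so projecting onto those channels does not destroy the cancellation that $v$ was designed to provide. Conditions (a), (b) and (d) play the role of well-posedness/realizability hypotheses—(a) and (d) are what allow an $\hat A$ with the required invariance to be constructed in the first place, while (b) supplies the Hurwitz property needed once the equilibrium is in hand. Consequently I do not expect a genuine obstacle; the only subtlety worth flagging in the write-up is to point out why conditions (a) and (d) are exactly what make a $P$-invariant solution of $B\hat A = A$ available, so that step (i) is not vacuous.
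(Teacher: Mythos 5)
Your argument is correct and is exactly the intended one: the paper states this theorem by citation to \cite{Baillieul2} without reproving it, and your computation---substituting $u=Kx+v$ with $v=-(\hat A+K)x_g$ from (\ref{OffSet2}) into $\dot x = Ax+BPu$ and using $P\hat A=\hat A$, $B\hat A=A$ to show $(A+BPK)x_g+BPv=0$, then invoking hypothesis (b) for convergence of $z=x-x_g$---is precisely the mechanism the surrounding text sets up. The only point worth adding in a write-up is the (one-line) full-availability case, which is already covered by Proposition 1 via (\ref{OffSet}).
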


The next two lemmas review simple facts about the input matrices $B$ under consideration.

\begin{lemma}\rm
Let $B$ be an $n\times m$ matrix with $m\ge n$.  If all principal minors of $B$ are nonzero, then $B$ has full rank.
\end{lemma}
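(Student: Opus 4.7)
The plan is essentially to read off the conclusion from the standard minor characterization of rank, so almost all of the work lies in pinning down what "principal minor" should mean when $B$ is rectangular. Since $B$ is $n\times m$ with $m\ge n$, the usual notion of a principal minor (same row and column index sets for a square matrix) does not literally apply; in the context of this paper the natural reading is that a principal minor of $B$ is the determinant of an $n\times n$ submatrix obtained by selecting $n$ of the $m$ columns, i.e.\ the determinant of $BP_I$ restricted to its nonzero columns, for some $I\in\binom{[m]}{n}$. This matches the lattice of coordinate projections $P_I$ that the earlier discussion of $k$-channel controllability and of Theorem~\ref{thm:jb:four} is organized around, and it is the only reading under which the hypothesis is nontrivial for all admissible $m,n$.

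With this interpretation fixed, the proof is immediate. First I would recall the classical fact that the rank of a matrix equals the largest size of a nonzero minor. Then, picking any one index set $I_0\in\binom{[m]}{n}$, the hypothesis gives that the corresponding $n\times n$ minor $\det(BP_{I_0})$ is nonzero. Hence the $n$ columns of $B$ indexed by $I_0$ are linearly independent, so $\mathrm{rank}(B)\ge n$; since $B$ has exactly $n$ rows, $\mathrm{rank}(B)=n$, which is full rank because $m\ge n$. Note that the hypothesis "all principal minors are nonzero" is in fact far stronger than what is needed for this conclusion — only one nonzero $n\times n$ minor is required — but the stronger hypothesis will presumably be used later (for instance when one wants every $(A,BP_I)$ pair to inherit the full-rank property simultaneously, as is required in Theorem~\ref{thm:jb:four}(d) and in the resilience arguments).

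There is no real obstacle: the lemma is a one-line consequence of the minor characterization of rank once the notion of principal minor for a rectangular $n\times m$ matrix has been fixed. The only point worth being explicit about in the write-up is that disambiguation, since the paper has not previously defined "principal minor" in the rectangular setting; the rest of the argument is a bookkeeping statement.
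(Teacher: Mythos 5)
Your proof is correct and rests on the same fact the paper uses, namely the equivalence between a nonvanishing $n\times n$ minor (taken over column selections) and full row rank; the paper merely phrases it contrapositively (a row dependence is inherited by every $n\times n$ submatrix, so rank deficiency would force all such minors to vanish), while you argue directly from one nonzero minor. Your reading of ``principal minor'' for the rectangular $B$ matches the paper's usage, and your observation that a single nonzero minor already suffices is accurate.
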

\begin{proof}
The matrix $B$ has either more or the same number of columns as rows.  Its rank is thus the number of linearly independent rows.  If there is a nontrivial linear combination of rows of $B$, then this linear combination is inherited by all $n\times n$ submatrices, and thus all principal minors would be zero.
\end{proof}

\begin{lemma}\rm
Let $B$ be an $n\times m$ matrix with $m\ge n$ and having full rank $n$.  Then $BB^T$ is positive definite.
\label{lem:jb:pd}
\end{lemma}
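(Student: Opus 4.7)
The plan is to verify the two defining properties of a positive definite matrix: symmetry, and strict positivity of the associated quadratic form on nonzero vectors.

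First I would observe that $BB^T$ is symmetric, since $(BB^T)^T = (B^T)^T B^T = BB^T$. This disposes of the symmetry requirement immediately.

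Next, for any $x \in \mathbb{R}^n$, I would rewrite the quadratic form as
\[
x^T B B^T x = (B^T x)^T (B^T x) = \|B^T x\|^2 \ge 0,
\]
which already shows $BB^T$ is positive semidefinite. To upgrade this to positive definiteness, I must argue that equality holds only for $x=0$. Here the full-rank hypothesis enters: since $B$ has rank $n$, its rows are linearly independent, so $B^T$ (an $m\times n$ matrix with $m\ge n$) has $n$ linearly independent columns and hence trivial null space. Therefore $B^T x = 0$ forces $x = 0$, and consequently $\|B^T x\|^2 > 0$ whenever $x \ne 0$.

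There is no real obstacle here; the argument is a one-step reduction to $\ker B^T = \{0\}$, which is immediate from $\operatorname{rank} B = n$. The only thing worth stating carefully is the rank identity $\operatorname{rank} B^T = \operatorname{rank} B$ that justifies passing from full row rank of $B$ to trivial kernel of $B^T$.
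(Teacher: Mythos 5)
Your proof is correct and follows essentially the same route as the paper's: both reduce positive definiteness to the observation that $x^TBB^Tx = \Vert B^Tx\Vert^2 \ge 0$ with equality forcing $x=0$, which follows from the linear independence of the rows of $B$. The only addition is your explicit check of symmetry, which the paper leaves implicit.
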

\begin{proof}
If $B$ has full rank, the $n$ rows are linearly independent.  Hence if $x^T B$ is the zero vector in $\mathbb{R}^m$, then we must have $x=0$.  Thus, for all $x\in \mathbb{R}^n$, $x^TBB^Tx = \Vert B^Tx\Vert^2 \ge 0$ with equality holding if and only if $x=0$.
\end{proof}

\begin{lemma}\rm
Consider a linear function $B:\mathbb{R}^m\to\mathbb{R}^n$ given by an $n\times m$ rank $n$ matrix $B$ with $m\ge n$.  This induces a linear function $\hat {\rm B}$ having rank $n^2$ from the space of $m\times n$ matrices, $\mathbb{R}^{m\times n}$, to the space of $n\times n$ matrices, $\mathbb{R}^{n\times n}$, which is given explicitly by $\hat {\rm B}$(Y) = $B\cdot $Y.
\end{lemma}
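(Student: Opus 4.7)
The plan is to establish surjectivity of $\hat{\mathrm{B}}$, since once this is shown the rank equals the dimension of the codomain $\mathbb{R}^{n\times n}$, namely $n^2$. Linearity of $\hat{\mathrm{B}}$ in $Y$ is immediate from distributivity of matrix multiplication, so the entire argument reduces to producing, for each target $Z\in\mathbb{R}^{n\times n}$, some $Y\in\mathbb{R}^{m\times n}$ with $BY=Z$.

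First I would decompose the problem column-by-column. Writing $Z=[z_1\ z_2\ \cdots\ z_n]$ and $Y=[y_1\ y_2\ \cdots\ y_n]$ with $z_j\in\mathbb{R}^n$ and $y_j\in\mathbb{R}^m$, the matrix equation $BY=Z$ is equivalent to the $n$ independent vector equations $By_j=z_j$ for $j=1,\dots,n$. Thus surjectivity of $\hat{\mathrm{B}}$ is equivalent to surjectivity of the underlying map $B:\mathbb{R}^m\to\mathbb{R}^n$, which is immediate from the hypothesis that $B$ has rank $n$ (full row rank).

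To make the solution explicit and to tie the result to Lemma \ref{lem:jb:pd}, I would exhibit a particular right inverse. Since $B$ has full row rank, Lemma \ref{lem:jb:pd} gives that $BB^T$ is positive definite and hence invertible, so $B^\dagger:=B^T(BB^T)^{-1}$ is a well-defined $m\times n$ right inverse satisfying $BB^\dagger=I_n$. Then for any $Z\in\mathbb{R}^{n\times n}$, the choice $Y=B^\dagger Z$ yields $\hat{\mathrm{B}}(Y)=BB^\dagger Z=Z$, proving surjectivity. Therefore $\mathrm{rank}\,\hat{\mathrm{B}}=\dim\mathbb{R}^{n\times n}=n^2$, as claimed.

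There is no real obstacle here; the only point that merits care is distinguishing the rank of $B$ as an $n\times m$ matrix (which is $n$) from the rank of the induced map $\hat{\mathrm{B}}$ on the larger spaces of matrices (which is $n^2$). The column-decomposition viewpoint makes this distinction transparent: the map $\hat{\mathrm{B}}$ essentially applies $B$ independently to each of the $n$ columns, multiplying the effective rank by $n$.
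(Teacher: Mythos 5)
Your argument is correct and is essentially the paper's own: the paper also proves the lemma by noting that full row rank gives $B$ a right inverse $U$ and that any target $A$ is the image of $UA$ under $\hat{\mathrm{B}}$, which is exactly your surjectivity argument with $U=B^T(BB^T)^{-1}$ made explicit. Your column-by-column reduction and the appeal to Lemma \ref{lem:jb:pd} are fine additional detail but do not change the route.
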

\begin{proof}
Because $B$ has full rank $n$, it has a rank $n$ right inverse $U$.  Given any $n\times n$ matrix $A$, the image of $UA\in\mathbb{R}^{m\times n}$ under $\hat{\rm B}$ is $A$, proving the lemma.
\end{proof}



\smallskip
This function lifting is depicted graphically by 
\[
\begin{array}{clcc}
\hat {\rm B}:&\mathbb{R}^{m\times n}&\rightarrow&\mathbb{R}^{n\times n}\\
&\ \ \big\vert&&\big\vert\\
B:&\mathbb{R}^m&\rightarrow&\mathbb{R}^n.
\end{array}
\]

\begin{lemma}\rm 
Given that the rank of the $n\times m$ matrix $B$ is $n$ with $n < m$, the dimension of the null space of $B$ is $m-n$.  The dimension of the null space of $\hat {\rm B}$ is $n(m-n)$.
\label{lem:jb:null}
\end{lemma}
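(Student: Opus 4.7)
The proof is a direct application of the rank--nullity theorem in two guises, so the plan is to treat each statement in turn and rely on results already established in the preceding lemmas.

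For the first claim, I would simply invoke rank--nullity for the linear map $B:\mathbb{R}^m\to\mathbb{R}^n$: since $\mathrm{rank}\,B = n$ by hypothesis, the null space has dimension $m - n$. No further argument is needed.

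For the second claim, I see two equally quick routes and would present whichever reads more cleanly. The first route simply applies rank--nullity to the lifted map $\hat{\rm B}:\mathbb{R}^{m\times n}\to\mathbb{R}^{n\times n}$. The domain has dimension $mn$, and the previous lemma establishes that $\mathrm{rank}\,\hat{\rm B} = n^2$, so the null space of $\hat{\rm B}$ has dimension $mn - n^2 = n(m-n)$. The second route is more structural and perhaps more illuminating: observe that $\hat{\rm B}(Y) = BY = 0$ holds if and only if every column of $Y$ lies in $\ker B$. Writing $Y = [y_1 \mid \cdots \mid y_n]$ with $y_i \in \mathbb{R}^m$, the condition decouples into $n$ independent constraints $y_i \in \ker B$, each of which (by the first part of the lemma) contributes $m-n$ degrees of freedom. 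Hence $\dim \ker \hat{\rm B} = n(m-n)$.

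There is no real obstacle here; the lemma is bookkeeping that packages the rank--nullity theorem for later use in the parameter-lifting arguments. The only small subtlety is ensuring the two descriptions of $\ker \hat{\rm B}$ agree, which the column-decoupling argument makes transparent, and it has the added virtue of identifying $\ker \hat{\rm B}$ explicitly as $(\ker B)^n \subset \mathbb{R}^{m\times n}$, a fact that may be useful downstream when choosing $\hat A$ subject to the invariance condition $P\hat A = \hat A$ appearing in Theorem~\ref{thm:jb:four}.
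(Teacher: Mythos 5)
Your proof is correct and the column-decoupling route is essentially the paper's own argument: the paper likewise identifies $\ker\hat{\rm B}$ with matrices whose columns are null vectors of $B$ and counts $m-n$ degrees of freedom for each of the $n$ columns. Your version (and the alternative via rank--nullity using the rank-$n^2$ lemma) is if anything stated more cleanly, since it makes explicit that $\ker\hat{\rm B}=(\ker B)^n$.
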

\begin{proof}
Let the set column vectors $\left\{\vec n_1,\dots,\vec n_{m-n}\right\}$ be a basis of the nullspace of $B$.  The set of all $m\times n$ matrices of the form $N=(\vec n_{i_1}\vdots\vec n_{i_2}\vdots\cdots\vdots\vec n_{i_n})$, where $i_j\in[m-n]$, is a linearly independent set that spans the null space of $\hat {\rm B}$.  There are $m-n$ independent choices of labels for each of the $n$ columns of $N$, proving the lemma.
\end{proof}

\begin{lemma}\rm
Let $A$ be an $n\times n$ matrix and $B$ be an $n\times m$ matrix with $m > n$ and $B$ having full rank $n$.  Then there is an $n(m-n)$-parameter family of solutions $X=\hat A$ to the matrix equation $\hat {\rm B}(X)=A$.
\end{lemma}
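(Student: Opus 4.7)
The plan is to recognize this as a standard affine-space statement about solutions of a surjective linear equation and to assemble the conclusion from the preceding lemmas. Concretely, the matrix equation $\hat{\rm B}(X) = A$ is a linear equation between finite-dimensional vector spaces, so its solution set, if nonempty, is an affine translate of the null space $\ker(\hat{\rm B})$. The argument therefore has exactly two ingredients: exhibit a particular solution, and compute the dimension of the kernel.

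For the particular solution, I would invoke the function-lifting lemma directly. Since $B$ has full rank $n$, it admits a right inverse $U\in\mathbb{R}^{m\times n}$ with $BU = I_n$. Then $X_0 := UA$ is an $m\times n$ matrix satisfying $\hat{\rm B}(X_0) = B(UA) = (BU)A = A$. This establishes existence and incidentally re-confirms the surjectivity (rank $n^2$) statement of the earlier lemma.

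For the dimension count, Lemma \ref{lem:jb:null} gives $\dim \ker(\hat{\rm B}) = n(m-n)$. By the standard affine-solution structure, every solution $X$ of $\hat{\rm B}(X) = A$ has the form $X = X_0 + N$ with $N \in \ker(\hat{\rm B})$, and conversely every such $X$ is a solution. Thus the solution set is an affine subspace of $\mathbb{R}^{m\times n}$ of dimension $n(m-n)$, which is precisely the claimed $n(m-n)$-parameter family.

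There is essentially no obstacle: the statement is a direct rank-nullity consequence once the preceding lemmas are in hand. The only small bookkeeping point to be careful about is stating the parameterization explicitly (choosing, say, the basis of $\ker(\hat{\rm B})$ produced in the proof of Lemma \ref{lem:jb:null}) so that the phrase ``$n(m-n)$-parameter family'' has a concrete meaning rather than being merely a dimension count.
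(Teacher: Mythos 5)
Your proof is correct and follows essentially the same route as the paper: a particular solution obtained from a right inverse of $B$ (the paper takes the explicit choice $\hat A = B^T(BB^T)^{-1}A$, valid by Lemma \ref{lem:jb:pd}), plus the affine translate of the null space of $\hat{\rm B}$, whose dimension $n(m-n)$ comes from Lemma \ref{lem:jb:null}.
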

\begin{proof}
From Lemma \ref{lem:jb:null}, we may find a basis $\{N_1,\dots,N_{n(m-n)}\}$ of the nullspace of $\hat {\rm B}$, which we denote ${\cal N}(\hat {\rm B})$.  A particular solution of the matrix equation is $\hat A=B^T(BB^T)^{-1}A$, and any other solution may be written as $B^T(BB^T)^{-1}A+N$ where $N\in{\cal N}(\hat {\rm B})$.  This proves the lemma.
\end{proof}

\begin{lemma} \rm
Consider the LTI system $\dot x = Ax + Bu$, where $A$ is an $n\times n$ real matrix and $B$ is an $n\times m$ real matrix with $m>n$.  Suppose that all principal minors of $B$ are nonzero, and further assume that $A=0$.  Then, for any real number $\alpha>0$, the feedback gain $K=-\alpha B^T$ is stabilizing; i.e.\ $BK$ has all eigenvalues in the open left half plane.  Moreover, for all $n\le j\le m$ and $I\in {[m]\choose j}$, all matrices $BP_IK$ have eigenvalues in the open left half plane.
\label{lem:jb:canon}
\end{lemma}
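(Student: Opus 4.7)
The plan is to exploit the symmetry trick $BP_I B^T = (BP_I)(BP_I)^T$ and reduce the claim to positive definiteness, which is the content of Lemma~\ref{lem:jb:pd} applied to successive submatrices. First I would address the case $I=[m]$ (equivalently, $P_I=I_m$). By Lemma~1 (nonzero principal minors imply full rank), $B$ has rank $n$, so Lemma~\ref{lem:jb:pd} gives that $BB^T$ is positive definite. Since $K=-\alpha B^T$, we have $BK=-\alpha BB^T$, which is symmetric negative definite, and therefore Hurwitz with strictly real negative eigenvalues.

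Next I would handle an arbitrary projection $P_I$ with $|I|=j$ and $n\le j\le m$. The key observation is that each such $P_I$ is an orthogonal projection: $P_I=P_I^T=P_I^2$. Hence
\[
BP_I K = -\alpha\, B P_I B^T = -\alpha\, (BP_I)\,(BP_I)^T.
\]
This matrix is symmetric and negative semidefinite, and it will be negative definite precisely when $BP_I$ has full row rank $n$. So the task reduces to showing that the $n\times m$ matrix $BP_I$ (which is just $B$ with the columns outside $I$ zeroed out) has rank $n$ for every $I\in\binom{[m]}{j}$ with $j\ge n$.

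The rank of $BP_I$ equals the rank of the $n\times j$ submatrix $B_I$ obtained by extracting the columns indexed by $I$. Since $j\ge n$, this submatrix contains at least one $n\times n$ column-submatrix of $B$, and by hypothesis every such submatrix has nonzero determinant. Thus $B_I$ contains $n$ linearly independent columns and so has rank $n$. Consequently $BP_I$ has rank $n$, and Lemma~\ref{lem:jb:pd} (applied to $BP_I$ in place of $B$) gives that $(BP_I)(BP_I)^T$ is positive definite. Therefore $BP_IK$ is symmetric negative definite and its eigenvalues lie strictly in the open left half plane.

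I do not anticipate a serious obstacle; the only subtle point is the interpretation of ``all principal minors of $B$ nonzero'' for the rectangular matrix $B$. Following the usage established in Lemma~1 of the paper, I would read this as saying that every $n\times n$ submatrix of $B$ has nonzero determinant, which is exactly the ingredient needed in the rank argument for $B_I$. With that reading, the entire argument is a clean two-line computation plus an application of Lemma~\ref{lem:jb:pd}.
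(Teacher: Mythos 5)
Your proof is correct and follows essentially the same route as the paper's: writing $BP_IK=-\alpha (BP_I)(BP_I)^T$ and invoking Lemma~\ref{lem:jb:pd} for the full-rank matrix $BP_I$. The only difference is that you spell out why $BP_I$ has rank $n$ (via the nonzero $n\times n$ minors of the columns indexed by $I$), a step the paper simply asserts, and your reading of ``principal minors'' matches the usage in the paper's Lemma~1.
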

\begin{proof}
Let $\bar B = P_IB$.  Noting that $\bar B$ has full rank $n$ and $\bar B\bar B^T=BP_IB^T$, the lemma follows from Lemma \ref{lem:jb:pd}.

\end{proof}

\begin{lemma} \rm
Consider the LTI system $\dot x = Ax + Bu$, where $A$ is an $n\times n$ real matrix and $B$ is an $n\times m$ real matrix with $m>n$.  Suppose that all principal minors of $B$ are nonzero, and further assume that $\hat A$ is a solution of $\hat {\rm B}(\hat A) = A$.  Then, for any real $\alpha>0$ and feedback gain 
\begin{equation}
K = -\alpha B^T-\hat A,
\label{eq:jb:defK}
\end{equation}
the closed loop system $\dot x=(A+BK)x$ is exponentially stable---i.e..the eigenvalues of $A+BK$ lie in the open left half plane.
\label{lem:jb:lastOfSec}
\end{lemma}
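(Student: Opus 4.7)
The plan is to show that the closed-loop matrix $A+BK$ simplifies dramatically: after substituting the proposed gain, the $A$ terms cancel and what remains is a negative multiple of $BB^T$, whose spectrum is already under control by the earlier lemmas.

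First I would simply plug $K = -\alpha B^T - \hat A$ into $A+BK$ and use the defining relation $B\hat A = A$ (which is exactly what $\hat{\rm B}(\hat A)=A$ means). This gives
\[
A + BK \;=\; A + B(-\alpha B^T - \hat A) \;=\; A - \alpha BB^T - B\hat A \;=\; -\alpha BB^T,
\]
so the closed-loop matrix is independent of the particular lifted parameter $\hat A$ that was chosen.

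Next, I would invoke the hypothesis that all principal minors of $B$ are nonzero. By the earlier lemma (rank from nonvanishing principal minors), $B$ has full rank $n$. Then Lemma \ref{lem:jb:pd} applies and gives $BB^T$ positive definite, so all its eigenvalues are strictly positive real numbers. Multiplying by $-\alpha$ with $\alpha>0$ sends these eigenvalues into the open negative real axis, so every eigenvalue of $A+BK = -\alpha BB^T$ lies in the open left half plane, establishing exponential stability.

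I do not anticipate any real obstacle: the computation is short, the cancellation is exact, and the needed spectral fact about $BB^T$ has already been proved. The only thing worth a brief comment in the write-up is that the result does not depend on which particular $\hat A\in\hat{\rm B}^{-1}(A)$ is chosen, even though the gain $K$ itself does depend on $\hat A$; this is a useful observation for the sequel, since the freedom in $\hat A$ (the $n(m-n)$-parameter family guaranteed by the preceding lemma) can then be spent entirely on satisfying the resilience and offset conditions of Theorem \ref{thm:jb:four} without jeopardizing closed-loop stability.
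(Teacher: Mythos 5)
Your proposal is correct and follows essentially the same route as the paper: substituting the gain and using $B\hat A=A$ collapses the closed loop to $-\alpha BB^T$, whose negative-definiteness (via full rank of $B$ and Lemma \ref{lem:jb:pd}) gives the result; the paper merely routes this last step through Lemma \ref{lem:jb:canon} rather than citing Lemma \ref{lem:jb:pd} directly. Your added remark that the closed-loop matrix is independent of the particular choice of $\hat A$ is a correct and worthwhile observation, consistent with how the paper later exploits the freedom in $\hat A$.
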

 \begin{proof}
Substituting $u=Kx$ into $\dot x=Ax + Bu$, we find that the closed-loop system is given by $\dot x = -\alpha BB^Tx$, and the conclusion follows from Lemma \ref{lem:jb:canon}.
 \end{proof}

\begin{definition}\rm
Let $I\in{[m]\choose j}$.  The {\it lattice of index supersets of I} in $[m]$ is given by ${\cal L}_I=\{L\subset [m]: I\subset L\}$.
\end{definition}

\begin{theorem} \rm
Consider the LTI system $\dot x = Ax + Bu$, where $A$ is an $n\times n$ real matrix and $B$ is an $n\times m$ real matrix with $m>n$.  Suppose that all principal minors of $B$ are nonzero, and further assume that $\hat A$ is a solution of $\hat {\rm B}(\hat A) = A$, and that for a given $I\in {[m]\choose j}$, $\hat A$ is invariant under $P_I$---i.e.$P_I\hat A=\hat A$.  Then, for any real $\alpha>0$ and feedback gain $K = -\alpha B^T-\hat A$ the closed loop systems $\dot x=(A+BP_LK)x$ are exponentially stable for all $L\in{\cal L}_I$.
\label{thm:jb:theorem2}
\end{theorem}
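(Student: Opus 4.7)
The plan is to show that under the stated hypotheses the closed-loop matrix $A+BP_LK$ collapses (after cancellation) to the negative definite matrix $-\alpha BP_LB^T$, at which point exponential stability follows from a Lemma \ref{lem:jb:canon}-style argument.

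The first step is to observe that the invariance condition $P_I\hat A=\hat A$ means precisely that the rows of $\hat A$ indexed by $[m]\setminus I$ are zero. For any $L\in\mathcal{L}_I$, we have $L\supset I$, hence $[m]\setminus L\subset[m]\setminus I$. Thus the rows of $\hat A$ indexed by $[m]\setminus L$ are \emph{a fortiori} zero, so the invariance is inherited: $P_L\hat A=\hat A$ for every $L\in\mathcal{L}_I$.

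Next I would substitute $K=-\alpha B^T-\hat A$ into the closed-loop matrix. Using $P_L\hat A=\hat A$ from the previous step and $B\hat A=A$ from the hypothesis that $\hat A$ solves $\hat{\mathrm B}(\hat A)=A$, we compute
\begin{equation*}
A+BP_LK \;=\; A-\alpha BP_LB^T-BP_L\hat A \;=\; A-\alpha BP_LB^T-B\hat A \;=\; -\alpha BP_LB^T.
\end{equation*}
So exponential stability of the closed loop reduces to showing that $BP_LB^T$ is positive definite.

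Finally, write $BP_LB^T=(BP_L)(BP_L)^T$ using $P_L^2=P_L=P_L^T$, and apply Lemma \ref{lem:jb:pd}: it suffices that $BP_L$ have full row rank $n$. Since $L\supset I$ and $|I|\ge n$ is implicit in the problem setup (for $\hat A$ to be meaningful), $L$ contains at least $n$ indices, and the hypothesis that every principal minor of $B$ is nonzero guarantees that any $n$ columns of $B$ are linearly independent. Hence $BP_L$ has rank $n$, $BP_LB^T\succ 0$, and the eigenvalues of $-\alpha BP_LB^T$ lie in the open left half plane, as required.

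The proof is essentially a bookkeeping exercise; the only subtle point -- and the place where the structural hypothesis on $\hat A$ really does the work -- is the inheritance step $P_I\hat A=\hat A\Longrightarrow P_L\hat A=\hat A$ for $L\supset I$, which is what allows the $-B\hat A$ term to cancel $A$ simultaneously for every $L$ in the lattice $\mathcal{L}_I$. Once that is in hand, the remainder is a direct appeal to Lemmas \ref{lem:jb:pd} and \ref{lem:jb:canon}.
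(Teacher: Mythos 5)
Your proof is correct and takes essentially the same route as the paper's: the paper simply invokes Lemma \ref{lem:jb:lastOfSec} with $BP_L$ and $P_LK$ in the roles of $B$ and $K$, which amounts to exactly your cancellation $A+BP_LK=-\alpha BP_LB^T$ followed by the positive-definiteness argument of Lemmas \ref{lem:jb:canon} and \ref{lem:jb:pd}. Your explicit treatment of the inheritance step $P_L\hat A=\hat A$ for $L\supset I$ and of the rank of $BP_L$ (requiring $|I|\ge n$) merely spells out what the paper leaves implicit.
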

\begin{proof} The theorem follows from Lemma \ref{lem:jb:lastOfSec} with $BP_I$ and $P_IK$ here playing the roles of $B$ and $K$ in the lemma.
\end{proof}

Within the class of resilient (invariant under projections) stabilizing feedback designs identified in Theorem \ref{thm:jb:theorem2}, there is considerable freedom to vary paramters to meet system objectives.  To examine parameter choices, let $P_{[k]}$ be a diagonal $m\times m$ projection matrix having $k$-0's and $m-k$-1's on the principal diagonal ($1\le k\le m-n$).  For each such $P_{[k]}$, the $n(m-n)$ parameter family of solutions $\hat A$ to $\hat {\rm B}(\hat A)=A$ is further constrained by the invariance $P_{[k]}\hat A = \hat A$ which imposes $kn$ further constraints.  Hence, the family of $P_{[k]}$-invariant $\hat A$'s is $n(m-n-k)$-dimensional; see Fig.\ \ref{fig:zs:Lattice}.  Within this family, design choices may involve adjusting the overall strength of the feedback (parameter $\alpha$) or differentially adjusting the influence of each input channel by scaling rows of $B^T$ in (\ref{eq:jb:defK}).  Such weighting will be discussed in greater detail in Section \ref{sec:jb:quantized} where we note that to make the models reflective of the parallel actions of many simple inputs, it is necessary to normalize the weight $\alpha$ of $B^T$ in (\ref{eq:jb:defK}) so as to not let the influence of inputs depend on the total number of those available (as opposed to the total number acting according to the various projections $P$).  In other words, we want to take care not to overly weight the influence of large groups versus small groups of channels.


\begin{figure}[h]
\begin{center}
\includegraphics[scale=0.3]{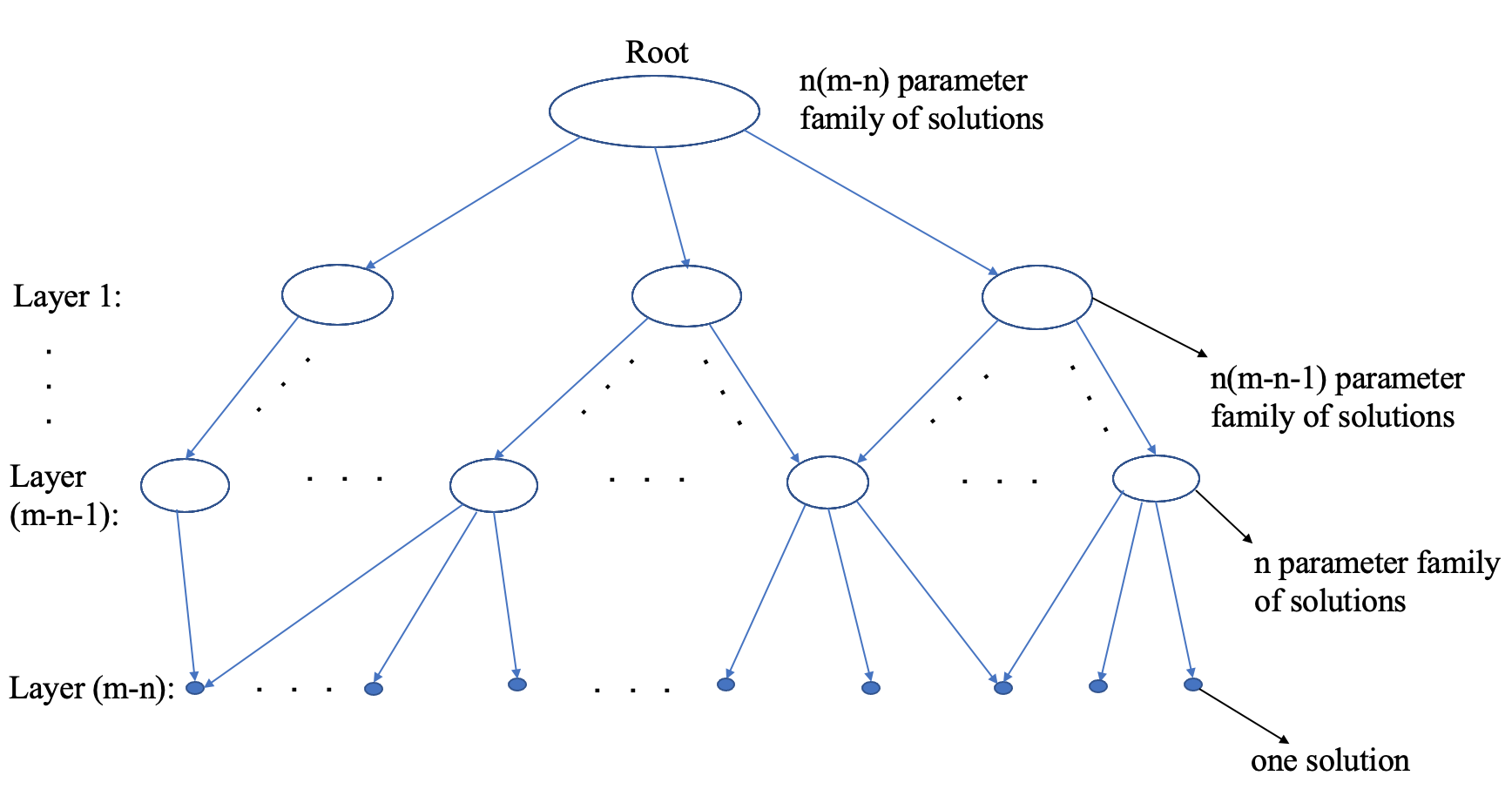}
\end{center}
\caption{Solutions to $\hat {\rm B}(\hat A) = A$ and the lattice of channel dropouts.  At the root, there is an $n(m-n)$-parameter family of solutions, and if any channel drops out, the solutions under the corresponding projection $P_I$ are an $n(m-n-1)$-parameter family.  At the bottom leaf nodes of the lattice where $m-n$ channels have become unavailable, there will typically be a single invariant $\hat A$.  Note that there are $m(m-1)\cdots(m-n+1)$ distinct paths from the root to the ${m\choose n}$ leaves.}
\label{fig:zs:Lattice}
\end{figure}

\begin{example}
We end this section with an example of random unavailability  of input channels.  Considering still the system of Example 1, define the feedback gain $K=\alpha B^T-\hat A$ where
\[
\hat A=\left(\begin{array}{cc}
0 & 0\\
0 & 1\\
0 & 0\end {array}
\right),
\]
and $\alpha =2$.  Consider channel intermittency in which each of the three channels is randomly available according to a two-state Markov process defined by
\[
\left(\begin{array}{c}
\dot p_u(t) \\
\dot p_a(t)
\end{array}\right) =
\left(\begin{array}{cc}
-\delta & \epsilon\\
\delta & -\epsilon\end{array} \right)\left(\begin{array}{c}
p_u(t) \\
p_a(t)
\end{array} \right),
\]
where $p_u\ =$ probability of channel unavailability; $p_a = 1-p_u\ =$ probability of the channel being available.  Assuming the availabilities of the channels are {\em i.i.d.} with this availability law, simulations show that for a range of parameters $\delta,\epsilon>0$ that
\begin{equation}
\dot x=(A+BP(t)K)\,x
\label{eq:jb:time-varying}
\end{equation}
is asymptotically stable with the time-dependent projection $P(t)$  being a diagonal matrix whose diagonal entries can take any combination of values 0 and 1.  The special case $\delta=3,\ \epsilon=3$ is illustrated in Fig.\ \ref{fig:zs:time-varying}.  The hypotheses of Theorem \ref{thm:jb:theorem2} characterize a worst case scenario in terms of stability since there may be time intervals over which (\ref{eq:jb:time-varying}) does not satisfy the hypotheses because $P(t)$ does not leave $\hat A$ invariant.
\end{example}

\begin{figure}[htbp]
\centerline {\includegraphics[scale=0.5]{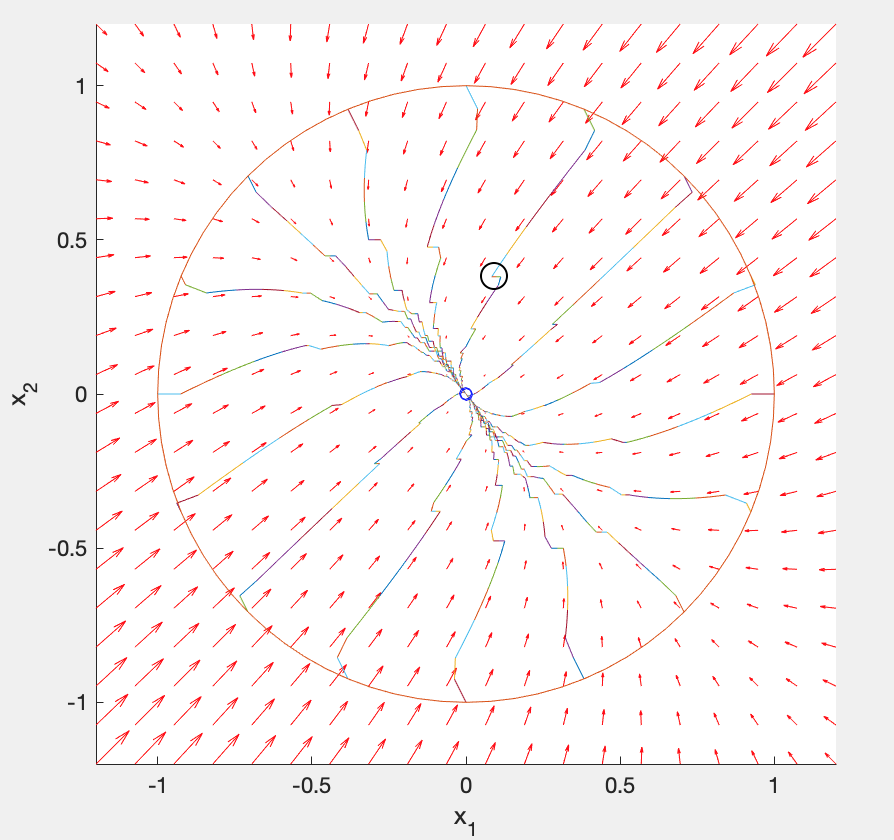}}
\caption{An example shows that when channels are intermittently available, the designed feedback can still maintain the asymptotic  stability of the system, even if there are some short short periods of instability as exhibited in the black circle.}
\label{fig:zs:time-varying}
\end{figure}

\bigskip
\section{Uncertainty reduction from adding a channel}
Suppose that instead of eq.\ (\ref{OffSet2}), uncertainty in each input channel is taken into account and the actual system dynamics are governed by
\begin{equation}
(\hat A + K)x_g+v+n_{\epsilon}=0,
\end{equation}
where $n_{\epsilon}$ is a Gaussian random vector whose entries are $i.i.d \in N(0,1)$.  Assume further that these channel-wise uncertainties are mutually  independent.  Then the asymptotic steady state limit $x_{\infty}$ will satisfy
\begin{equation}
E[\Vert x_{\infty}-x_g\Vert^2] = tr[(A+BK)^{-1}B\Sigma_{\epsilon}B^T((A+BK)^T)^{-1}],
\label{eq:jb:errorCov}
\end{equation}
where $\Sigma_{\epsilon}=I$ is the $m\times m$ covariance of the channel perturbations.  The question of how steady state error is affected by the addition of a noisy channel  is partially addressed as follows.
\begin{theorem}
Suppose the system (\ref{eq:jb:linear}) is controllable, and that $K$ and $\hat A$ have been chosen as in Theorem \ref{thm:jb:theorem2} so that $A+BK$ is Hurwitz and the control input $u(t) = \ Kx(t)+v$ has been chosen to steer the state to $x_g$ using $v$ given by (\ref{OffSet2}).  Let $ b\in\mathbb{R}^n$ and consider the augmented $n\times(m+1)$ matrix $\bar B=(B\vdots b)$.  Then with the control input $\bar u = \bar K x + \bar v$, where $\bar K=\left(\begin{array}{cc}K\\ k\end{array}\right)$, $k=-\alpha b^T $, and offset $\bar v= \left(\begin{array}{cc}v\\ 0\end{array}\right)$, the system $\dot x = Ax + \bar B\bar u$ is steered such that the steady state error covariance is
\[
\Sigma_{\bar B} = M(B \vdots b)\Sigma_{\epsilon}(B \vdots  b)^TM^T,
\]
where
\[
M=\left(A+(B\vdots b)(\begin{array}{c}K\\ k \end{array})\right)^{-1}.
\]
The corresponding steady state error covariance for $\dot x = Ax + Bu$ is
\[
\Sigma_B =(A+BK)^{-1}B\Sigma_{\epsilon}B^T((A+BK)^T)^{-1},\ and
\]
if the mean squared asymptotic errors under the two control laws are denoted by $err_{\bar B}$ and $err_B$,  then $err_{\bar B}<err_B$.
\label{thm:jb:uncertainty}
\end{theorem}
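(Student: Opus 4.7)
The plan is to exploit the specific structure of the feedback gains prescribed in Theorem \ref{thm:jb:theorem2}. First, I would simplify the closed-loop matrices. Since $K = -\alpha B^T - \hat A$ and $B\hat A = A$, we get
\[
A + BK = A - \alpha BB^T - B\hat A = -\alpha BB^T.
\]
For the augmented design, $\bar B\bar K = BK + bk = -\alpha BB^T - B\hat A - \alpha bb^T$, so
\[
A + \bar B\bar K = -\alpha\bigl(BB^T + bb^T\bigr).
\]
Both closed-loop matrices are negative definite (invertible) by Lemma \ref{lem:jb:pd}, so the Lyapunov/steady-state expressions are well defined.

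Next, substituting $\Sigma_\epsilon = I$ into the covariance formulas given in the theorem statement collapses them to
\[
\Sigma_B = \tfrac{1}{\alpha^2}(BB^T)^{-1},\qquad
\Sigma_{\bar B} = \tfrac{1}{\alpha^2}(BB^T + bb^T)^{-1},
\]
because $(B\vdots b)(B\vdots b)^T = BB^T + bb^T$ cancels one factor of the inverse in the sandwich. Hence the mean-squared asymptotic errors reduce to comparing $\mathrm{tr}(BB^T+bb^T)^{-1}$ with $\mathrm{tr}(BB^T)^{-1}$.

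The comparison is then a one-line application of the Sherman--Morrison identity. Write $P = BB^T$, which is positive definite by Lemma \ref{lem:jb:pd}. Then
\[
(P + bb^T)^{-1} = P^{-1} - \frac{P^{-1}bb^TP^{-1}}{1 + b^TP^{-1}b},
\]
and taking traces,
\[
\mathrm{tr}(P+bb^T)^{-1} = \mathrm{tr}\,P^{-1} - \frac{b^TP^{-2}b}{1 + b^TP^{-1}b}.
\]
Since $P^{-2}$ is positive definite, $b^TP^{-2}b \ge 0$, with equality only when $b = 0$; and $1 + b^TP^{-1}b > 0$ always. Thus for any nonzero augmenting direction $b$ the correction term is strictly positive, giving $\mathrm{err}_{\bar B} < \mathrm{err}_B$.

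The steps above are all routine; the only genuine content is recognizing that the prescribed choice $K = -\alpha B^T - \hat A$ makes the closed-loop matrix equal to $-\alpha BB^T$, so that the covariance inverse telescopes and the Sherman--Morrison rank-one update does the rest. If one wished to state the result without the tacit hypothesis $b \ne 0$, one would say $\mathrm{err}_{\bar B} \le \mathrm{err}_B$ with equality iff the added channel is trivial; the statement as written implicitly assumes $b \ne 0$.
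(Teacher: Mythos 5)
Your proposal is correct and, in its core, identical to the paper's proof: both compute $A+BK=-\alpha BB^T$ and $A+\bar B\bar K=-\alpha(BB^T+bb^T)$ from the structure $K=-\alpha B^T-\hat A$, both observe that the sandwich $M\bar B\bar B^TM^T$ telescopes so that $err_B=\frac{1}{\alpha^2}\mathrm{tr}\,(BB^T)^{-1}$ and $err_{\bar B}=\frac{1}{\alpha^2}\mathrm{tr}\,(BB^T+bb^T)^{-1}$. The only divergence is the final comparison: the paper simply asserts $BB^T+bb^T>BB^T$ in the positive-definite ordering and lets antitonicity of the inverse finish the argument, whereas you apply the Sherman--Morrison identity to $(BB^T+bb^T)^{-1}$ and exhibit the trace decrease explicitly as $\frac{b^TP^{-2}b}{1+b^TP^{-1}b}$ with $P=BB^T$. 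Your route is slightly more careful on exactly the point the paper glosses over: since $bb^T$ has rank one, the ordering $BB^T+bb^T\ge BB^T$ is not strict in the Loewner sense for $n\ge 2$, so strictness of the trace inequality genuinely needs an argument such as your rank-one update formula, and your closing remark that the result requires $b\ne 0$ (with equality when the added channel is trivial) is a caveat the paper leaves implicit.
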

\begin{proof}
Let $M_B=(A+BK)^{-1}$ and $M_{\bar B}=\left(A+(B\vdots b)(\begin{array}{c}K\\ k \end{array})\right)^{-1}.$  We have $A+(B\vdots b)\left(\begin{array}{c} K\\ k\end{array}\right) = A+BK+ b k = A+B(-\alpha B^T-\hat A)-\alpha  b b^T = -\alpha (BB^T+bb^T$).  Hence $M_{\bar B}=-(1/\alpha)(BB^T+bb^T)^{-1}$, whereas $M_B=-(1/\alpha)(BB^T)^{-1}$.

Whence, in comparing mean square errors,
\[ 
\begin{array}{ccl}
err_{\hat B} & = & tr\left\{M_{\bar B} \bar B {\bar B}^TM_{\bar B}^T \right\}\\[0.07in]
& = & (1/\alpha^2)tr\left\{ (BB^T+ bb^T)^{-1}(BB^T+ bb^T)\right.\\
&&\ \ \ \ \ \ \ \ \ \ \left.(BB^T+ bb^T)^{-1}\right\}\\[0.07in]
& = & (1/\alpha^2)tr\left\{(BB^T+ bb^T)^{-1}\right\};
\end{array}
\]
\[
\begin{array}{ccl}
err_B & = & tr\left\{M_B BB^TM_B^T \right\}\\[0.07in]
& = & (1/\alpha^2)tr\left\{ ((BB^T)^{-1}(BB^T)(BB^T)^{-1}\right\}\\[0.07in]
& = & (1/\alpha^2)tr\left\{ (BB^T)^{-1}\right\}.
\end{array}
\]
Since $BB^T+bb^T > BB^T$ in the natural ordering of symmetric positive definite matrices, the conclusion of the theorem follows.
\end{proof}


\section{The nuanced control authority of highly parallel quantized actuation\label{sec:jb:quantized}}

Technological advances have made it both possible and desirable to re-examine classical linear feedback designs using a new lens as described in the previous sections.  In this section, we revisit the concepts of the previous sections in terms of quantized control along the lines of \cite{Baillieul2007},  \cite{Baillieul2004}, \cite{Baillieul2004a}, \cite{Baillieul2004b}, and \cite{Nair2}.  The advantages of large numbers of input channels in terms of reducing energy (\cite{Baillieul2}) and uncertainty (Theorem \ref{thm:jb:uncertainty} above) come at a cost that is not easily modeled in the context of linear time-invariant (LTI) systems.  Indeed as the number of control input channels increases, so does the amount of information that must be processed to implement a feedback design of the form of Theorem \ref{thm:jb:theorem2} in terms of the {\em attention} needed (measured in bits per second) to ensure stable motions.  (See \cite{Baillieul2002} for a discussion of attention in feedback control.)  An equally subtle problem with the models of massively parallel input described in the preceding sections is the need for asymptotic scaling of inputs as the number of channels grows.  If the number of input channels is large enough, then the feedback control $u=Kx$ for $K=-B^T$ will be stabilizing for any system (1).  This is certainly plausible based on Gershgorin's Theorem and the observation that if $B_1$ has full rank $m$, and $B_2$ is obtained from $B_1$ by adding one (or more) columns, then $B_2B_2^T > B_1B_1^T$ in terms of the standard order relationship on positive definite matrices.  A precise statement of how fast the matrix $BB^T$ grows from adding columns in given by the following.
 
\begin{proposition}
Let $B$ be an $2\times m$ matrix with $m>2$ whose columns are unit vectors uniformly spaced on the unit circle $S^1$.  Then, the spectral radius of $BB^T$ is $m/2$.
\label{prop:jb:one}
\end{proposition}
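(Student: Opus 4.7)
The plan is to exploit the rotational symmetry of the column configuration to force $BB^T$ to be a scalar matrix, then compute the scalar by taking the trace. Parametrize the columns as $b_k=(\cos\theta_k,\sin\theta_k)^T$ with $\theta_k=\theta_0+2\pi k/m$ for $k=0,1,\dots,m-1$, so that $BB^T=\sum_{k=0}^{m-1}b_k b_k^T$ is a sum of $m$ rank-one projections onto equally spaced lines.

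The first main step is to observe that if $R$ is the planar rotation through angle $2\pi/m$, then multiplication by $R$ cyclically permutes the $b_k$, hence
\[
R(BB^T)R^T=\sum_{k=0}^{m-1}(Rb_k)(Rb_k)^T=\sum_{k=0}^{m-1}b_{k+1}b_{k+1}^T=BB^T,
\]
so $BB^T$ commutes with $R$. For $m>2$, the rotation $R$ has eigenvalues $e^{\pm 2\pi i/m}$, neither of which is real, so $R$ preserves no one-dimensional real subspace of $\mathbb{R}^2$. Since $BB^T$ is real symmetric, any eigenspace would be $R$-invariant, forcing both eigenspaces to coincide with $\mathbb{R}^2$. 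Therefore $BB^T=\lambda I$ for some $\lambda\in\mathbb{R}$.

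The second step is to pin down $\lambda$ by taking the trace: $\mathrm{tr}(BB^T)=\sum_{k=0}^{m-1}\|b_k\|^2=m$, while $\mathrm{tr}(\lambda I)=2\lambda$, so $\lambda=m/2$. The spectral radius of $\lambda I$ is $|\lambda|=m/2$, which is the claim.

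The only potentially delicate point is the symmetry argument for $m>2$; if one prefers a computation-only route, a parallel proof computes the entries of $BB^T$ directly, using the double-angle identities $\cos^2\theta=(1+\cos 2\theta)/2$, $\sin^2\theta=(1-\cos 2\theta)/2$, and $2\cos\theta\sin\theta=\sin 2\theta$, and then sums the geometric series $\sum_{k=0}^{m-1}e^{2i(2\pi k/m)}$, which vanishes precisely because $e^{4\pi i/m}\neq 1$ when $m>2$. Either route yields $BB^T=(m/2)I$ and therefore spectral radius $m/2$; the symmetry argument is shorter and makes the role of the hypothesis $m>2$ transparent.
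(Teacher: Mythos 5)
Your proof is correct, and it takes a genuinely different route from the paper's. The paper proves the proposition by direct calculation: it writes out the entries of $BB^T$ as the sums $\sum_k \cos^2\frac{2k\pi}{m}$, $\sum_k \sin^2\frac{2k\pi}{m}$, and $\sum_k \sin\frac{2k\pi}{m}\cos\frac{2k\pi}{m}$ and invokes standard trigonometric identities to conclude that the off-diagonal entries vanish and the diagonal entries equal $m/2$ --- essentially the ``computation-only route'' you sketch at the end, resting on the vanishing of the geometric series $\sum_k e^{4\pi i k/m}$ for $m>2$. Your main argument instead avoids computing any entries: the cyclic symmetry shows $BB^T$ commutes with the rotation $R$ through $2\pi/m$, the absence of real eigenvectors of $R$ for $m>2$ forces $BB^T=\lambda I$, and the trace identity $\operatorname{tr}(BB^T)=\sum_k\Vert b_k\Vert^2=m$ pins down $\lambda=m/2$. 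What your route buys is conceptual clarity: it isolates exactly where the hypothesis $m>2$ enters (for $m=2$ the two columns are antipodal and $BB^T$ is rank one, so the conclusion fails), it is insensitive to the choice of phase $\theta_0$, and it exhibits the columns as a tight frame for $\mathbb{R}^2$ without any trig identities. What the paper's computational route buys is that it extends directly, via the product structure of the spherical-coordinate parametrization, to the higher-dimensional Theorem 3 that immediately follows, where the symmetry group of a parametrically regular distribution is less convenient to exploit; your symmetry argument as stated is specific to the planar case.
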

\begin{proof}
The proof is by direct calculation.  There is no loss of generality in assuming the $m$ uniformly distributed vectors are $(\cos \frac{2 k \pi}{m},\sin \frac{2 k \pi}{m})$.  $BB^T$ is  then given explicitly by
\[
BB^T=\left(\begin{array}{ll}
\sum_{k=1}^m \cos^2 \frac{2 k \pi}{m} & \sum_{k=1}^m \sin  \frac{2 k \pi}{m} \cos  \frac{2 k \pi}{m}\\[0.05in]
 \sum_{k=1}^m \sin  \frac{2 k \pi}{m} \cos  \frac{2 k \pi}{m} & \sum_{k=1}^m \sin^2 \frac{2 k \pi}{m} 
 \end{array}
 \right).
\]
Standard trig identities show the off-diagonal matrix entries are zero and the diagonal entries are $m/2$, proving the proposition.
\end{proof}
For $n>2$, the conclusion is similar but slightly more complex.  We consider $n\times m$ matrices $B$ ($m>n$) whose columns are unit vectors in $\mathbb{R}^n$.  Following standard constructions of Euler angles (e.g. http://www.baillieul.org/Robotics/740\_L9.pdf), we define spherical coordinates (or generalized Euler angles) on the $n-1$-sphere $S^{n-1}$ whereby points $(x_1,\dots,x_n)\in S^{n-1}$ are given by
\[
\begin{array}{lll}
x_1& =& \sin\theta_1\sin\theta_2\cdots\ \sin\theta_{n-1}\\ 
x_2& = & \sin\theta_1\sin\theta_2\cdots\ \cos\theta_{n-1}\\
x_3& =&\sin\theta_1\sin\theta_2\cdots\cos\theta_{n-2}\\
&\vdots &\\
x_{n-1}  & = & \sin\theta_1\cos\theta_2\\
x_n &=& \cos\theta_1,
\end{array}
\]
where $0\le\theta_1\le\pi,\ 0\le\theta_j<2 \pi$ for $j=2,\dots,n-1$.  We shall call a distribution of points on $S^{n-1}$ {\em parametrically regular} if it is given by $\theta_1 = \frac{j\pi}{N_1},\ (j=0,\dots,N_1)$, and $\theta_k = \frac{2j\pi}{N_k},\ (j=1,\dots,N_k)$ for $2\le k\le n-1$.  The following extends Proposition \ref{prop:jb:one} to $n\times m$ matrices where $n>2$.

\begin{theorem}
Let the $n\times m$ matrix $B$ comprise columns consisting of all unit $n$-vectors associated with the parametrically regular distribution $(N_1,\dots,N_{n-1})$ with all $N_j>2$.  $B$ then has $m=(N_1+1)N_2\cdots N_{n-1}$ columns; $BB^T$ is diagonal, and the largest diagonal entry (eigenvalue) is $\frac{N_1+2}{2}N_2\cdots N_{n-1}$.
\end{theorem}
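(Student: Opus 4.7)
The plan is to reduce everything to one-variable trigonometric sums over the parametric grid and exploit orthogonality of sines and cosines. First I will verify the column count: the grid has $N_1+1$ choices of $\theta_1$ and $N_k$ choices of $\theta_k$ for $k=2,\dots,n-1$, so $m=(N_1+1)N_2\cdots N_{n-1}$. Throughout, I will repeatedly use the identities $\cos^2\theta=(1+\cos 2\theta)/2$, $\sin^2\theta=(1-\cos 2\theta)/2$, and $\sin\theta\cos\theta=\tfrac12\sin 2\theta$, together with the geometric-sum facts that for $N_k>2$,
\[
\sum_{j=1}^{N_k} e^{4\pi i j/N_k}=0,
\]
while the $\theta_1$-grid, which includes both endpoints, satisfies $\sum_{j=0}^{N_1}\cos(2j\pi/N_1)=1$ and $\sum_{j=0}^{N_1}\sin(2j\pi/N_1)=0$.

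To show $BB^T$ is diagonal, I will argue that for any $i<j$, the product $x_i(\theta)\,x_j(\theta)$ factors as $(\text{something independent of }\theta_k)\cdot\sin\theta_k\cos\theta_k$ for a specific ``terminating'' index $k=k(i,j)$ read off from the Euler-angle formulas: in the generic case $1\le i<j\le n-1$ this is $k=n-j+1$, while in the edge case $j=n$ (where $x_n=\cos\theta_1$ has no sines) it is $k=1$. In every case the sum over the $\theta_k$-grid of $\sin\theta_k\cos\theta_k=\tfrac12\sin 2\theta_k$ is zero by the geometric-sum identities above (this is exactly where the hypothesis $N_k>2$ is used, to rule out $e^{4\pi i/N_k}=1$). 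Factoring out this vanishing one-variable sum kills the off-diagonal entry.

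For the diagonal entries, each $x_i^2$ factors as a product of $\sin^2\theta_k$ and (at most one) $\cos^2\theta_k$ terms, so $(BB^T)_{ii}$ factors into a product of one-variable sums. For $k\ge 2$, both $\sum\sin^2$ and $\sum\cos^2$ equal $N_k/2$, while indices $k$ that do not appear in $x_i^2$ contribute the plain count $N_k$ (or $N_1+1$ for $k=1$). The $\theta_1$-sum is the anomaly: $\sum_{j=0}^{N_1}\cos^2(j\pi/N_1)=(N_1+2)/2$ but $\sum_{j=0}^{N_1}\sin^2(j\pi/N_1)=N_1/2$. The coordinate $x_n=\cos\theta_1$ is the unique one that (i) picks up the enhanced $(N_1+2)/2$ factor from $\cos^2\theta_1$, and (ii) contains no other $\theta_k$ at all, so the sum over $\theta_2,\dots,\theta_{n-1}$ contributes the full $N_2\cdots N_{n-1}$ rather than a product of $N_k/2$'s. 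Hence $(BB^T)_{nn}=\tfrac{N_1+2}{2}N_2\cdots N_{n-1}$, and a term-by-term comparison shows every other diagonal entry is smaller because each substitutes at least one $N_k/2$ or $N_1/2$ for a full $N_k$ or $(N_1+2)/2$. Since $BB^T$ is diagonal with positive entries, the spectral radius is the maximum diagonal entry.

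I expect the main obstacle to be bookkeeping rather than any analytic difficulty: one must keep the correspondence $i\leftrightarrow\theta_{n-i+1}$ straight across several edge cases ($i=1$, $j=n$, or $1<i<j<n$), and one must track carefully why the half-range grid of $\theta_1$ with both endpoints produces the $+2$ in $N_1+2$ while the full-period grids of $\theta_2,\dots,\theta_{n-1}$ do not. Once that indexing is settled, the proof reduces to a handful of elementary trigonometric sums.
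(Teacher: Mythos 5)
Your proposal is correct and follows essentially the same route as the paper's own (very terse) proof: factor each entry of $BB^T$ over the Cartesian product grid into one-variable trigonometric sums, use $\sum\sin\theta_k\cos\theta_k=0$ type identities to kill the off-diagonal entries, and evaluate the diagonal sums, with the half-range $\theta_1$ grid including both endpoints producing the anomalous $\tfrac{N_1+2}{2}$ that singles out the $x_n=\cos\theta_1$ row as the largest eigenvalue. Your write-up is in fact more complete than the paper's, which omits the $\theta_1$ cross-term $\sum\sin\theta_1\cos\theta_1=0$ and writes the $\theta_1$ sum limits loosely; the only quibble is that the hypothesis $N_k>2$ is really needed for the diagonal identities $\sum\sin^2=\sum\cos^2=N_k/2$ rather than for the off-diagonal sine sums, which vanish regardless.
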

\begin{proof}
The matrices $B$ of the theorem have the form
\[
\left(\begin{array}{ccc}
& \sin(\frac{j_1\pi}{N_1})\sin(\frac{j_22\pi}{N_2})\cdots\sin(\frac{j_{n-1}2\pi}{N_{n-1}}) & \\
\cdots & \sin(\frac{j_1\pi}{N_1})\sin(\frac{j_22\pi}{N_2})\cdots\cos(\frac{j_{n-1}2\pi}{N_{n-1}}) & \cdots \\
\cdots & \vdots & \cdots \\
& \cos(\frac{j_1\pi}{N_1}) & \end{array}\right).
\]
The entries in the product $BB^T$ may be factored into products of sums of the form
\[
\begin{array}{cccc}
\ \sum_{k=1}^{N_1}\cos^2(\frac{k\pi}{N_1}) = \frac{N_1+2}{2}, &\sum_{k=1}^{N_j}\sin^2(\frac{k2\pi}{N_j}) = \frac{N_j}{2},\\ \sum_{k=1}^{N_j}\cos^2(\frac{k2\pi}{N_j}) = \frac{N_j}{2},&\ \ \ \ \ \ \  \sum_{k=1}^{N_j}\sin(\frac{k2\pi}{N_j})\cos(\frac{k2\pi}{N_j}) = 0,  (j=2,\dots,n-1),&\end{array}
\]
and from this the result follows.
\end{proof}
\begin{remark}\rm
Simulation experiments show that for matrices $B$ comprised of columns of random unit vectors in $\mathbb{R}^n$ that are approximately parametrically regular in their distribution, the matrix norm (=largest singular value) of $BB^T$ is ${\cal O}(m)$, in agreement with the theorem.
\end{remark}

To keep the focus on channel intermittency and pursue the {\em emulation} problems of the following section, we shall assume the parameter $\alpha$ appearing in Theorem \ref{thm:jb:theorem2} is inversely related to the size, $m$, of the matrix $B$, and for the case of quantized inputs considered next, other bounds on input magnitude may apply as well.

We conclude by considering discrete-time descriptions of (\ref{eq:jb:linear}) in which the control inputs to each channel are selected from finite sets having two or more elements.  To start, suppose (\ref{eq:jb:linear}) is steered by inputs that take continuous values in $\mathbb{R}^m$ but are piecewise constant between uniformly spaced instants of time: $t_0<t_1< \cdots;\ t_{k+1}-t_k = h$, and $u(t)=u(t_k)$, for $t_k\le t <t_{k+1}$.  Then the state transition between sampling instants is given by
\begin{equation}
x(t_{k+1})=F_hx(t_k) + \Gamma_h u(t_k)
\label{eq:jb:sampled}
\end{equation}
where
\[
F_h=e^{Ah}\ \ {\rm and}\ \ \Gamma_h=\int_0^he^{A(h-s)}\,ds\cdot B.
\]
When $B$ (and hence $\Gamma_h $) has more columns than rows, there is a parameter lifting $\hat\Gamma:\mathbb{R}^{m\times n}\rightarrow\mathbb{R}^{n\times n}$, and it is possible to find $\hat F$ satisfying $\hat\Gamma(\hat F)=F_h$.  The relationship between $\hat B$ and $\hat\Gamma$ and between the lifted parameters $\hat A$ and $\hat F$ is highly nonlinear, and will not be explored further here.  Rather, we consider first order approximations to $F_h$ and $\Gamma_h$ and rewrite (\ref{eq:jb:sampled}) as
\[
\begin{array}{ccl}
x(t_{k+1})&=&(I+Ah)x(t_k) + hBu(t_k) + o(h)\\[0.05in]
&=&x(t_k)+h(Ax(t_k) + Bu(t_k)) + o(h).
\end{array}
\]
We may use lifted parameters of the first order terms and write the first order, discrete time approximation to (\ref{eq:jb:linear}) as
\begin{equation}
x(k+1)=(I+h \hat{\rm B}(\hat A))x(k) + hBu(k).
\label{eq:jb:discrete}
\end{equation}

Having approximated the system in this way, we consider the problem of steering (\ref{eq:jb:discrete}) by input vectors $u(k)$ having each entry selected from a finite set, say $\{-1,1\}$ in the simplest case.  A variety of MEMS devices that operate utilizing very large arrays of binary actuators can be modeled in this way, and successful control designs in adaptive optics applications have been either open-loop or hybrid open-loop combined with modulation from optical wavefront sensors (\cite{Baillieul1999},\cite{Bifano}).  The approach being pursued in the present work aims to close feedback loops using real-time measurements of the state.  For any state $x(k)$, binary choices must be made to determine the value of the input to each of the $m$ channels.  Control  design is thus equivalent to implementing a selection function along the lines of \cite{Baillieul2002}.  Some details of research on such designs are described in the next section, but complete results will appear elsewhere.

\section{Conclusions and work in progress}

It is easy to show that if inputs $u(k)$ to (\ref{eq:jb:discrete}) can take a continuum of values in an open neighborhood of the origin in $\mathbb{R}^m$, then feedback laws based on sample-and-hold versions of (\ref{eq:jb:defK}) can be designed to asymptotically steer the state of (\ref{eq:jb:discrete}) to the origin of the state space $\mathbb{R}^n$.  Current work is aimed at extending the ideas we have described in the above sections of the paper to a theory of feedback control designs for (\ref{eq:jb:discrete}) in which control inputs at each time step are selected from a finite set ${\cal U}$ of admissible control values.  One goal is design of both selection functions and modulation strategies whereby systems of the form (\ref{eq:jb:discrete}) with finitely quantized feedback can emulate the motions of continuous time systems like the ones treated in the preceding sections.  In the quest to find control theoretic abstractions of widely studied problems in neuroscience, the work is organized around three themes: {\em neural coding} (characterizing the brain responses to stimuli by collective electrical activity---spiking and bursting---of groups of neurons and the relationships among the electrical activities of the neurons in the group), {\em neural modulation} (real-time transformations of neural circuits induced by various means including chemically or through neural input from projection neurons) wherein connections within the anatomical connectome are made to specify the functional brain circuits that give rise to behavior \cite{Marder}, and {\em neural emulation} (finding patterns of neural activity that associate potential actions to the outcomes those actions are expected to produce, \cite{Colder}).

The simplest coding and emulation problems in this context involve finding state-dependent binary vectors $u[k]\in \{-1,1\}^m$ that elicit the desired state transitions of (\ref{eq:jb:discrete}).  For finite dimensional linear systems, it is natural to consider using  inputs that are both spatially and temporally discrete to emulate continuous vector fields as follows.
Let $H$ be an $n\times n$ Hurwitz matrix.  A discretized Euler approximation to the solution of $\dot x = Hx;\ x(0)=x_0$ is
\[
x(k+1)=(I+hH)x(k); x(0)=x_0.
\]
Two specific problems of emulating this system by (\ref{eq:jb:discrete}) with binary inputs are:
\begin{itemize}
\item Find a partition of the state space $\{U_i\,:\, \cup\  U_i = \mathbb{R}^n;\ \ U_i^o\cap U_j^o=\emptyset;\ U_i^o={\rm interior}\ U_i\}$ and a rule for assigning coordinates of $u(k)$ to be either +1 or -1, so that for each $x\in U_i$, $Ax + Bu(k)$ is as close as possible to $Hx$ (in an appropriate metric).
\item Find a partition and a rule that for each $U_i$ both assigns $\pm 1$ coordinate entries of $u(k)$ together with a coordinate projection operation $P(k)$ determining which channels are operative in the partition cell $U_i$ with the property the $Ax+BP(k)u(k)$ is as close as possible to $Hx$ (in an appropriate metric).
\end{itemize}
\begin{figure}[h]
\begin{center}
\includegraphics[scale=0.3]{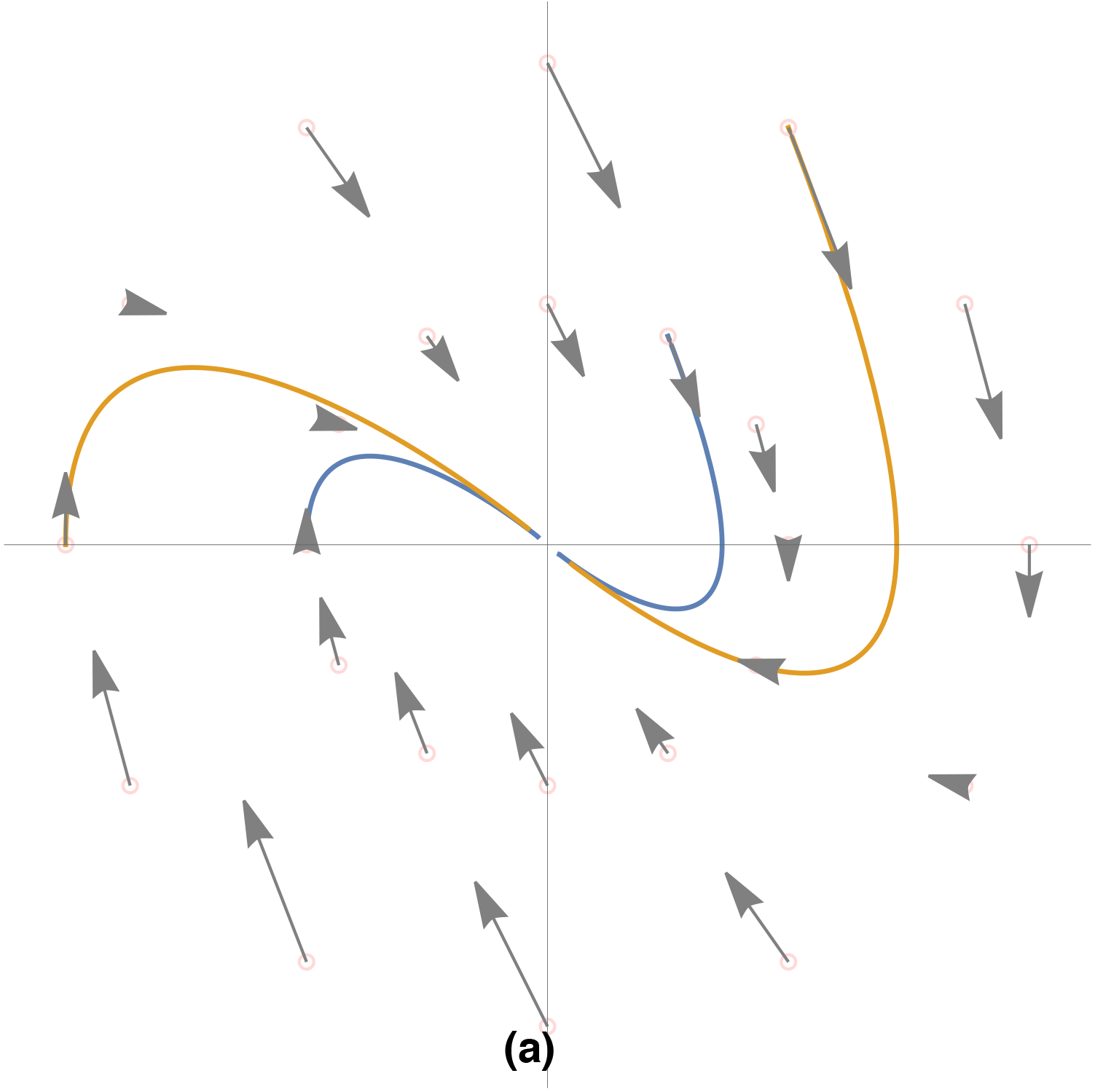}
\includegraphics[scale=0.3]{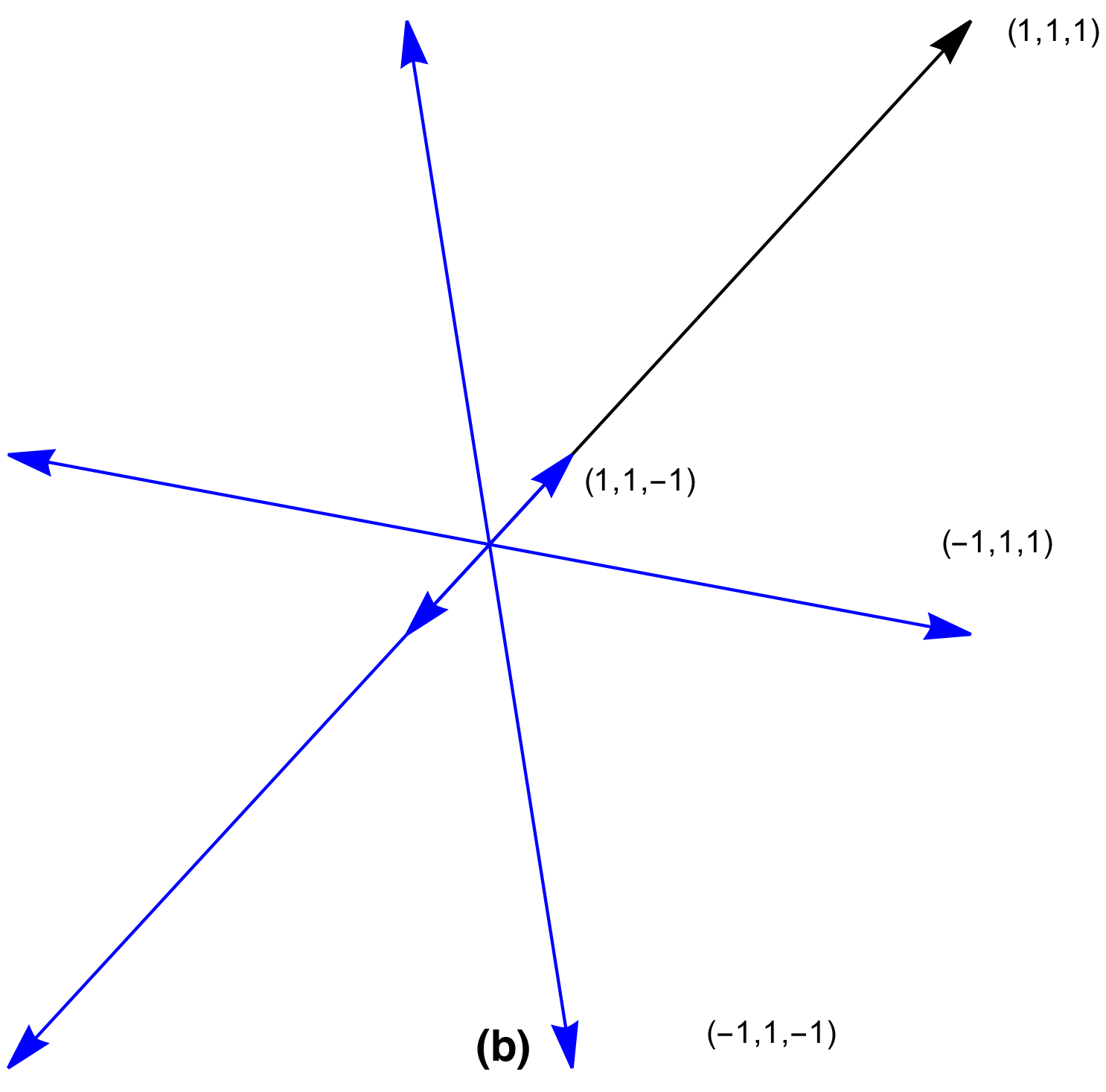}\\[0.1in]
\includegraphics[scale=0.3]{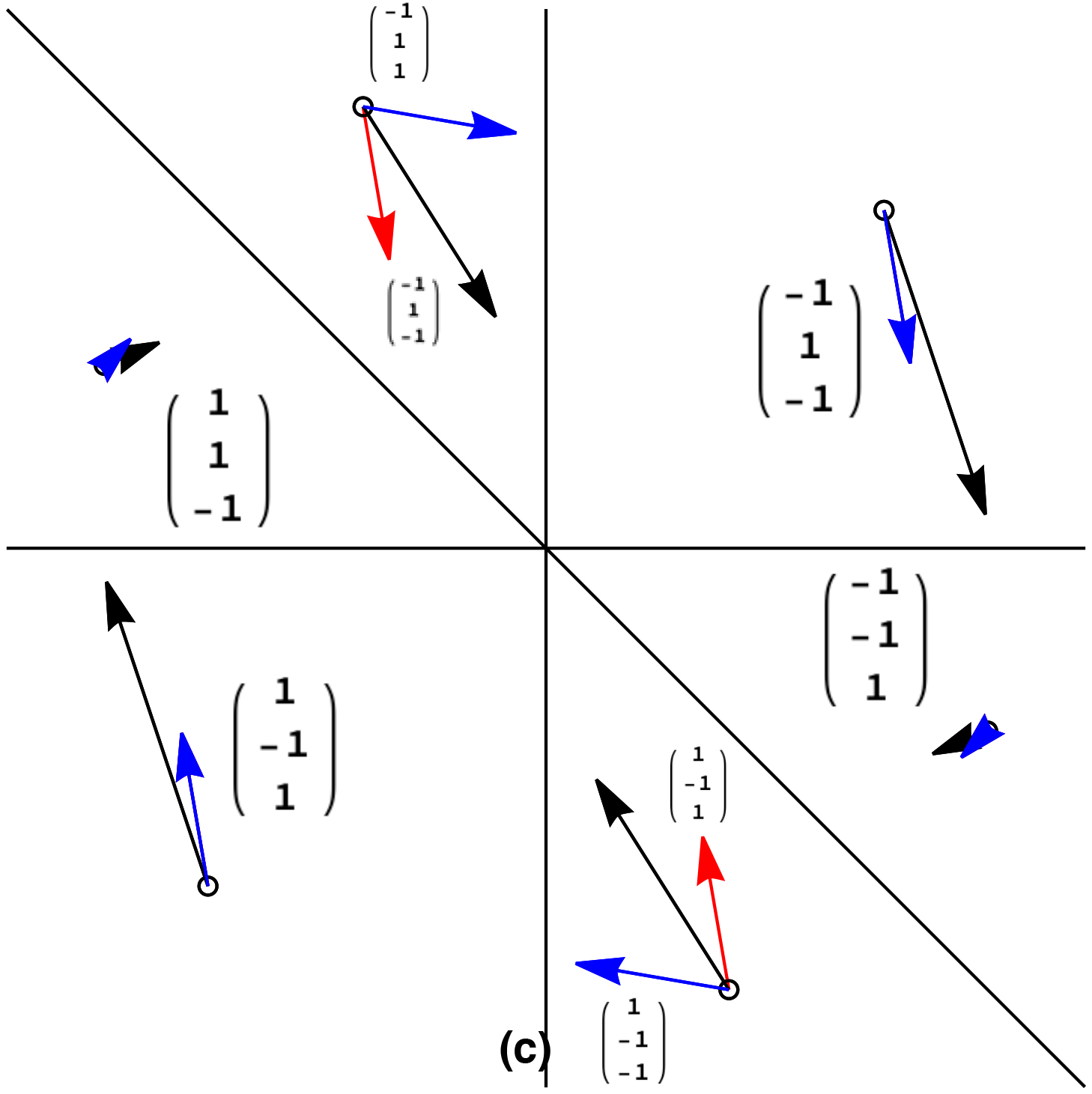}
\includegraphics[scale=0.3]{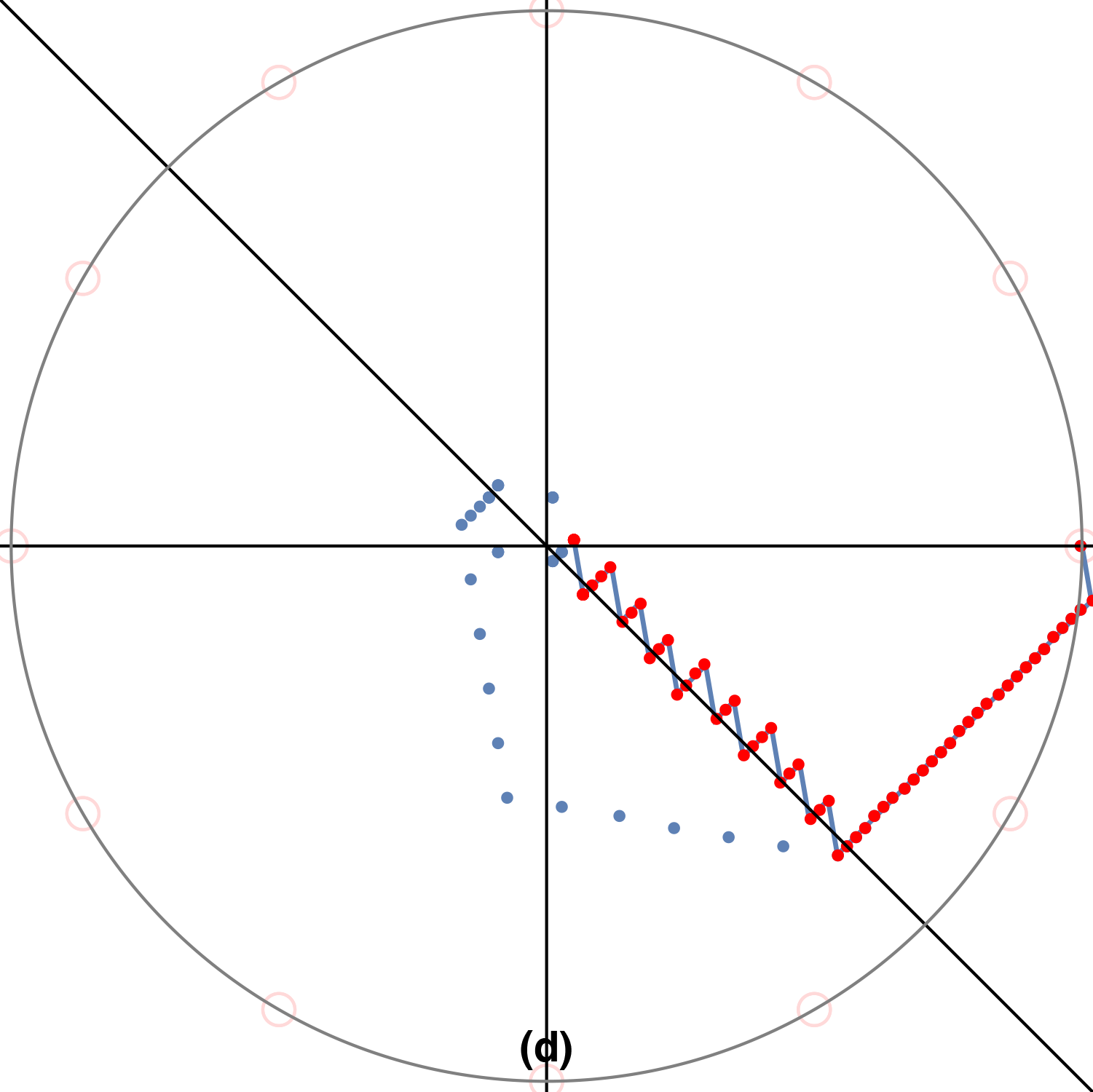}
\end{center}
\caption{(a) Linear vectors fields scale in magnitude with the distance from the origin. (b) The choices of vector directions for the matrix $B$ with binary $\pm 1$ inputs.  (c) Choices of vector directions as shown in (b) and specified by $\pm 1$ inputs to the channels of $B$ to provide plausible approximations to the continuous vector field of (a), and (d) is typical paths followed by applying the $\pm 1$ inputs as prescribed in (c).}
\label{fig:jb:linear}
\end{figure}
\begin{example}
Returning to the two-state, three-input examples considered in Section II, suppose $A=0$ and $B=\left(\begin{array}{ccc}
0&1&1/\sqrt{2}\\
1&0&1/\sqrt{2}\end{array}\right)$.  We have normalized the third column of $B$ in order that all three channels are equally influential in steering with binary inputs chosen from $\{-1,1\}$.  Consider $H=\left(\begin{array}{cc} 0&1\\-1&-2\end{array}\right)$, a Hurwitz matrix with both eigenvalues equal to -1.  A sketch of the vector field and flow is shown in Fig.~\ref{fig:jb:linear}(a).  It is noted that length scales of linear vector fields increase in proportion to their distance from the origin.  This needs to be accounted for in interpolations using binary affine maps in a way that may be similar to the way that neurons in the entorhinal cortex---grid cells in particular---encode information about length scales.  In mammalian brains, grid cells show stable firing patterns that are correlated with hexagonal grids of locations that an animal visits as it moves around a range space.   Grid cells appear to be arranged in discrete modules anatomically arranged in order of increasing scale  along the dorso-ventral axis of the medial entorhinal cortex \cite{Stensola, Bush}, with each module's neurons firing at module-specific scales.
\end{example}

The mechanisms of neural coding and modulation are ares of active research in brain science, and it is in part for this reason that we have avoided being precise in specifying approximation metrics in this emulation problem.  Both binary codings depicted in Fig.~\ref{fig:jb:linear} steer the quantized system (\ref{eq:jb:discrete}) toward the origin---but along qualitatively different paths.  In observing animal movements in field and laboratory settings, it is found that they can exhibit a good deal of individuality in choosing paths from a common start point to a common goal (\cite{KongEtAl}).  There is also evidence that among regions of the brain that guide movement, some exhibit neural activity that is more varied than others such as the grid cell modules in the entorhinal cortex \cite{LeeCubed}, which tend to be similar in neuroanatomy from one animal to the next.  As future work will be focused on systems with large numbers of simple inputs along with learning strategies, we will be aiming to understand the emergence of multiple and diverse solutions that meet similar control objectives.

In treating neuro-inspired approaches to input modulation along the lines suggested by the second bullet above, we note that with enough control channels available,  it is possible to pursue quantized control input designs in which a large enough and appropriately chosen group of binary inputs can simulate certain amplitude-modulated analog signals.  How this approach scales with control task complexity and numbers of neurons needed for satisfactory execution remains ongoing work.

Finally, we note that our work exploring neuromimetic control has focused on linear systems only because they are the most widely studied and best understood.  There is no reason that exploration of nonlinear control systems that model land and air vehicles cannot be approached in ways that are similar to what we have presented.  For such models, connections between task geometry and time-to-execute can be studied.  Deeper connections to neurobiological aspects of sensing and control in and of themselves will also call for nonlinear models.  This is foreshadowed in the piecewise selection functions that define the quantized control actions in this section.  Connections with established theories of threshold-linear networks (\cite{Curto}) and with other models competitive neural dynamics (\cite{Layton}) remain under investigation.  The neurobiology literature on the control of physical movement is very rich, and there appears to be much to explore.

\bibliographystyle{IEEEtran}

\end{document}